\newcommand\wh[1]{\hstretch{2}{\hat{\hstretch{.5}{#1}}}}
\newtheorem{theorem}{Theorem}
\newtheorem{lemma}{Lemma}
\newtheorem{remark}{Remark}
\newtheorem{assumption}{Assumption}
\newtheorem{corollary}{Corollary}
\theoremstyle{definition}
\title{Federated Learning With Quantized Global Model Updates}
\author{%
\textbf{Mohammad Mohammadi Amiri}$^1$, \textbf{Deniz G\"und\"uz}$^2$, \textbf{Sanjeev R. Kulkarni}$^1$, \textbf{H.~Vincent~Poor}$^1$\\\vspace{-.3cm}
Princeton University$^1$, Imperial College London$^2$\\
\texttt{\{mamiri, kulkarni, poor\}@princeton.edu}, \texttt{d.gunduz@imperial.ac.uk}\\
}
\begin{document}

\maketitle

\begin{abstract}
We study federated learning (FL), which enables mobile devices to utilize their local datasets to collaboratively train a global model with the help of a central server, while keeping data localized. At each iteration, the server broadcasts the current global model to the devices for local training, and aggregates the local model updates from the devices to update the global model. Previous work on the communication efficiency of FL has mainly focused on the aggregation of model updates from the devices, assuming perfect broadcasting of the global model. In this paper, we instead consider broadcasting a compressed version of the global model. 
This is to further reduce the communication cost of FL, which can be particularly limited when the global model is to be transmitted over a wireless medium. We introduce a lossy FL (LFL) algorithm, in which both the global model and the local model updates are quantized before being transmitted. 
We analyze the convergence behavior of the proposed LFL algorithm assuming the availability of accurate local model updates at the server. 
Numerical experiments show that the proposed LFL scheme, which quantizes the global model update (with respect to the global model estimate at the devices) rather than the global model itself, significantly outperforms other existing schemes studying quantization of the global model at the PS-to-device direction. 
Also, the performance loss of the proposed scheme is marginal compared to the fully lossless approach, where the PS and the devices transmit their messages entirely without any quantization.
\end{abstract}

\section{Introduction}\label{SecIntroduction}

%Today wireless devices, with growing storage and computational power, generate massive amounts of data every day. It is increasingly attractive to push the network computation to the edge by storing and processing this data locally at the devices while preserving the privacy of the local data. 
Federated learning (FL) enables wireless devices to collaboratively train a global model by utilizing locally available data and computational capabilities under the coordination of a parameter server (PS) while the data never leaves the devices \cite{McMahanFLOnline}.

In FL with $M$ devices the goal is to minimize a loss function
$F(\boldsymbol{\theta}) = \sum\nolimits_{m=1}^{M} \frac{B_m}{B} F_m \left(\boldsymbol{\theta} \right)$
%\begin{align}
%F(\boldsymbol{\theta}) = \sum\limits_{m=1}^{M} \frac{B_m}{B} F_m \left(\boldsymbol{\theta} \right),    
%\end{align}
with respect to the global model $\boldsymbol{\theta} \in \mathbb{R}^d$, where $F_m \left( \boldsymbol{\theta} \right) = \frac{1}{B_m} \sum\nolimits_{\boldsymbol{u} \in \mathcal{B}_m} f \left(\boldsymbol{\theta}, \boldsymbol{u} \right)$ is the loss function at device $m$,
%\begin{align}
%F_m \left( \boldsymbol{\theta} \right) = \frac{1}{B_m} \sum\limits_{\boldsymbol{u} \in \mathcal{B}_m} f \left(\boldsymbol{\theta}, \boldsymbol{u} \right),    
%\end{align}
with $\mathcal{B}_m$ representing device $m$'s local dataset of size $B_m$, $B \triangleq \sum\nolimits_{m=1}^{M} B_m$, and $f(\cdot, \cdot)$ is an empirical loss function. 
Having access to the global model $\boldsymbol{\theta}$, device $m$ utilizes its local dataset and performs multiple iterations of stochastic gradient descent (SGD) in order to minimize the local loss function $F_m \left( \boldsymbol{\theta} \right)$.
It then sends the local model update to the server, which aggregates the local updates from all the devices to update the global model.

%The main aspects differentiating FL from the traditional distributed learning are the number of devices that can participate in the training and the distributed nature of the data, where each device has its own localized data. 
FL mainly targets mobile applications at the network edge, and the wireless communication links connecting these devices to the network are typically limited in bandwidth and power, and suffer from various channel impairments such as fading, shadowing, or interference; 
%Thus, communication is even a bigger hurdle in FL compared to the distributed learning. 
%The problem is further exacerbated as the models grow in size, and 
hence the need to develop an FL framework with limited communication requirements becomes more vital. 
While communication-efficient FL has been widely studied, prior works mainly focused on the devices-to-PS links, assuming perfect broadcasting of the global model to the devices at each iteration.
In this paper, we design an FL algorithm aiming to reduce the cost of both PS-to-device and devices-to-PS communications. 
To address the importance of quantization at the PS-to-device direction, we highlight that some devices simply may not have the sufficient bandwidth to receive the global model update when the model size is relatively large, particularly in the wireless setting, where the devices are away from the base station. 
This would result in consistent exclusion of these devices, resulting in significant performance loss.
Moreover, the impact of quantization in the device-to-PS direction is less severe due to the impact of averaging local updates at the PS.

%The main aspects differentiating FL from the traditional distributed learning are heterogeneity of data and system. 
%With FL data is naturally distributed and devices have their own localized data, resulting in data heterogeneity; also, a significant number of devices can participate in the training, which leads to system heterogeneity as a result of straggling effect due to unreliable connection and/or low processing power. 

\vspace{-0cm}
\paragraph{Related work}
%Recently there have been significant efforts to address a variety of issues with the FL framework, including limited communication requirements from the devices to the server, privacy concerns, and statistical challenges due to the bias in the data distribution across the devices and system heterogeneity. 
There is a fast-growing body of literature on the communication efficiency of FL targeting restricted bandwidth devices. 
Several studies address this issue by considering communications with rate limitations, and propose
%communications from the devices to the server under limited latency, bandwidth, energy, or network connectivity through 
different compression and quantization techniques
\cite{DCKonecnyFederated,McMahan2017CommunicationEfficientLO,KonecnyRandDistMean,CryptoNetsDowlinFL,KonecnyFLBeyondData,DCLinHanDeepGradComp,COLAFLHe,MohammadDenizDSGDCS}, as well as performing local updates to reduce the frequency of communications from the devices to the PS \cite{UseLocalSGDLin,LocalSGDConvFastStich}. 
%Also, providing privacy guarantees for FL has received increasing attention \cite{BonawitzSecureAggregFL,McMahanDifferentiallyPrivateFL,PrivacyPreserveFLJiang,PrivacyPreserveFLChabanne,BhowmickProtectionPrivateFL,GeyerDP_FL,AdaptiveClippinDPThakkar}, where there are different notions of privacy typically classified into two main categories: global privacy and local privacy \cite{FLChallMethFutDir}.  
Statistical challenges arise in FL since the data samples may not be independent and identically distributed (iid) across devices. The common sources of the dependence or bias in data distribution are the participating devices being located in a particular geographic region, and/or at a particular time window \cite{AdvancesOpenProbsFL}. Different approaches have been studied to mitigate the effect of non-iid data in FL \cite{McMahan2017CommunicationEfficientLO,HsiehNonIIDQuagmire,FLHeterogeneousNetsLi,DatasetDistillationWang,Semi_CyclicSGD,FLWithnonIIDZhao}. 
Also, FL suffers from a significant variability in the system, which is mainly due to the hardware, network connectivity, and available power associated with different devices \cite{FLChallMethFutDir}. 
Active device selection schemes have been introduced to alleviate significant variability in FL systems, where a subset of devices share the resources and participate at each iteration of training \cite{IncentiveFLKang,ClientSelectionFLNishio,MohammadDenizSanjVinceISIT20,HowardVinceSchedulingsFL,YangArafaVinceAgeBasedFL}.
There have also been efforts in developing convergence guarantees for FL under various scenarios, considering iid data across the devices \cite{LocalSGDConvFastStich,CooperativeSGDFLJoshi,GraphOracleModelsWoodworth,ConverKStepSGDZhou,UnifiedTheoryDecDGDTopologyKoloskova}, non-iid data \cite{UnifiedTheoryDecDGDTopologyKoloskova,FLHeterogeneousNetsLi,ConvLocalDescentFLHaddadpour,XLiFedAveFLnonIID}, participation of all the devices \cite{FirstConvAnalysisGDKhaled,AdaptiveFLEdgeComputingWang,ParallelRestartedSGDYu,FasterOnDeviceFLMomentum}, or only a subset of devices at each iteration \cite{FedDANENewtonTypeFL,SCAFFOLDControlledFL,DynamicFederatedLearningSayed,XLiFedAveFLnonIID,FLTWCConvergenceMohammadDenizSanjVince}, and FL under limited communication constraints \cite{FLTWCConvergenceMohammadDenizSanjVince,DCRechtHogwildSGD,SparseConvergenceAlistarh,FedPAQFLPeriodicQuantizationReisizadeh}.
Furthermore, FL with compressed global model transmission has been studied recently in \cite{ExpandingRedClientResFL,DoubleSqueezeTangConf} aiming to alleviate the communication footprint from the PS to the devices.
Since the global model parameters are relatively skewed/diverse, with the scheme in \cite{ExpandingRedClientResFL} at each iteration the PS employs a linear transform before quantization, and the devices apply the inverse linear transform to estimate the global model. 
On the other hand, error compensation at the PS is employed in \cite{DoubleSqueezeTangConf} to accumulate the error of quantizing the global model.
%where the server broadcast a compressed version of the global model to the devices to reduce the communication cost.
%Also mention the works studying convergence 
%\cite{DCKonecnyFederated,McMahan2017CommunicationEfficientLO,KonecnyRandDistMean,SmithFedMultiTask,KonecnyFLBeyondData,NishioFLClientSelecHetRes,FLWithnonIIDZhao,XLiFedAveFLnonIID,COLAFLHe,AgonsticFLMohri}. 

\vspace{-0cm}
\paragraph{Our contributions} 
With the exception of \cite{ExpandingRedClientResFL,DoubleSqueezeTangConf}, the literature on FL considers perfect broadcasting of the global model from the PS to the devices. 
%and the availability of an accurate global model at the devices used for local training. 
With this assumption, no matter what type of local update or device-to-PS communication strategy is used, all the devices are synchronized with the same global model at each iteration.
In this paper, we instead consider broadcasting a quantized version of the global model update by the PS, which provides the devices with a lossy estimate of the global model (rather than its accurate estimate) with which to perform local training. 
This further reduces the communication cost of FL, which can be particularly limited for transmission over a wireless medium while serving a massive number of devices. 
Also, it is interesting to investigate the impact of various hyperparameters on the performance of FL with lossy broadcasting of the global model since FL involves transmission over wireless networks with limited bandwidth. 
We introduce a lossy FL (LFL) algorithm, where at each iteration the PS broadcasts a compressed version of the global model update to all the devices through quantization.
To be precise, the PS exploits the knowledge of the last global model estimate available at the devices as side information to quantize the global model update.
The devices recover an estimate of the current global model by combining the received quantized global model update with their previous estimate, and perform local training using their estimate, and return the local model updates, again employing quantization.
The PS updates the global model after receiving the quantized local model updates from the devices. 
We provide convergence analysis of the LFL algorithm investigating the impact of lossy broadcasting on the performance of FL, where for ease of analysis we assume the availability of accurate local model updates from the devices at the PS. 
%We introduce \textit{digital} and \textit{analog} transmission approaches over downlink; 
%with the digital downlink approach, the PS employs a quantization technique followed by a capacity achieving channel code to broadcast the quantized information about the model parameter vector over the wireless fading broadcast channel with a common rate accommodating to the capacity of the worst device so that all the devices can decode it. 
%On the other hand, with the analog downlink approach, the PS transmits the model parameter vector in an analog/uncoded manner over the wireless fading broadcast channel, and the devices receive different noisy versions of the joint model. 
%We model the uplink from the devices to the PS, over which the devices share information about their local model updates, with a bandwidth-limited wireless fading MAC. We follow the existing works highlighting the efficiency of the analog transmission over uplink for FL \cite{MohammadDenizDSGDCS,KaibinParallelWork,FLTWCMohammadDenizFading}, and consider analog communications over the wireless fading MAC. 
%We further provide convergence analysis for the analog downlink approach by building upon the convergence analysis in our previous work, where for ease of analysis we assume error-free uplink transmission and focus on the impact of noisy downlink transmission over the convergence behaviour of FL.    
Numerical experiments on the MNIST and CIFAR-10 datasets illustrate the efficiency of the proposed LFL algorithm. 
We observe that the proposed LFL scheme, which leads to a significant communication cost saving, provides a promising performance with no visible gap to the performance of the fully lossless scenario where the communication from both PS-to-device and device-to-PS directions is assumed to be perfect. 
Also, it is illustrated that the proposed LFL scheme significantly outperforms the schemes introduced in \cite{ExpandingRedClientResFL} and \cite{DoubleSqueezeTangConf} considering compression from the PS to devices. 
%This observation is corroborated by the analytical convergence result. 
%that broadcasting a quantized global model from the PS can be beneficial when the data  significant advantages of the analog downlink approach compared to its digital counterpart, where the improvement is more significant when the data is not independent and identically distributed (i.i.d.) across the devices, while both approaches improve with the number of devices thanks to the additional power introduced by each device.
%Analytical convergence results corroborate the experimental ones on the need to reduce the number of local iterations providing the best performance when introducing bias in the data distribution across the devices. 
%Also, both analytical and experimental results report the same conclusion on the impact of reducing the transmit power at the PS on the best number of local iterations at the devices.

We highlight that the proposed LFL algorithm differs from the approaches introduced in \cite{ExpandingRedClientResFL,DoubleSqueezeTangConf}, where the PS sends a quantized version of the current global model to a subset of devices that will participate in the learning process at that iteration. 
The efficiency of quantization diminishes significantly when the peak-to-average ratio of the parameters is large. 
To overcome this, in \cite{ExpandingRedClientResFL} the PS first employs a linear transform in order to spread the information of the global model vector more evenly among its dimensions, and broadcasts a quantized version of the resultant vector.
Furthermore, in \cite{DoubleSqueezeTangConf} the PS broadcasts quantized global model with error accumulation to compensate the quantization error. 
%Whereas, with the proposed LFL algorithm, the PS quantizes the global model update and broadcasts the quantized vector to all the devices, where 
Instead, we propose broadcasting the global model update, with respect to the previous estimate at the devices, rather than the global model itself.
We remark that the global model update
%, whose quantized version is broadcast by the PS using the LFL algorithm, 
has less variability/variance and peak-to-average ratio than the global model, and hence, for the same communication load, the devices can have a more accurate estimate of the global model.
%having received a quantized global model update rather than quantized global model. 
However, this would require all the devices to track the global model at each iteration, even if they do not participate in the learning process by sending their local update.
We argue that broadcasting the global model update to the whole set of devices, rather than a randomly chosen subset, would introduce limited additional communication cost as broadcasting is typically more efficient than sending independent information to devices. 
Moreover, in practice, the subset of participating devices remain the same for a number of iterations, until a device leaves or joins.
Our algorithm can easily be adopted to such scenarios by sending the global model, rather than the model update, every time the subset of devices changes.
Note also that, compared to the LFL algorithm, the approach introduced in \cite{ExpandingRedClientResFL} requires a significantly higher computational overhead due to employing the linear transform at the PS and its inverse at the devices, where this overhead grows with the size of the model parameters. 
%, since the dimension of the global model, $d$, is typically very large. 
%With the LFL algorithm the PS quantizes the global model update and broadcasts it without any additional computation, and the efficiency of this approach is shown both numerically and analytically as it can improve the perfect broadcasting scenario. 
Furthermore, the performance evaluation in \cite{ExpandingRedClientResFL} is limited to the experimental results, while in this paper we provide an in-depth convergence analysis of the proposed LFL algorithm. 
The advantage of the proposed LFL algorithm over the approaches introduced in \cite{ExpandingRedClientResFL,DoubleSqueezeTangConf} is shown numerically, where, despite its significantly smaller communication load, it provides considerably higher accuracy. This illustrates that quantizing the global model update provides a more accurate global model estimate at the devices than quantizing the global model itself.

%We remark that the LFL algorithm requires that all the devices know the last global model in order to be able to recover the current global model estimate after receiving the quantized global model update. Therefore, all the devices should receive the quantized global model update at each iteration, while with the scheme in \cite{ExpandingRedClientResFL} the PS needs to broadcast the compressed global model to only a subset of the devices participating at each iteration. This introduces a trade-off between the computational overhead and the PS-to-device communication load. A direct application of the LFL algorithm is when the set of devices that can potentially participate in the training is known. On the other hand, for the cases when new devices might join the training at any iteration, we can extend the LFL algorithm by sending the compressed current global model to the new devices.

%Furthermore, with the approach in \cite{ExpandingRedClientResFL} the devices are provided with a less accurate global model estimate than the proposed LFL algorithm, since, besides the quantization error, the linear transform with the approach in \cite{ExpandingRedClientResFL} introduces additional estimation error.   
%This is validated through the experiments showing the superiority of the proposed LFL algorithm over the one introduced in \cite{ExpandingRedClientResFL}.
%it is trivial to characterize the statistical behavior 

\vspace{-0cm}
\paragraph{Notation} The set of real numbers is denoted by $\mathbb{R}$. For $x \in \mathbb{R}$, $\left| x \right|$ returns the absolute value of $x$. 
For a vector of real numbers $\boldsymbol{x}$, the largest and the smallest absolute values among all the entries of $\boldsymbol{x}$ are represented by $\max \left\{ \left| \boldsymbol{x} \right| \right\}$ and $\min \left\{ \left| \boldsymbol{x} \right| \right\}$, respectively. 
For an integer $i$, we let $[i] \triangleq \{1, 2, \dots, i\}$. The $l_2$-norm of vector $\boldsymbol{x}$ is denoted by $\left\| \boldsymbol{x} \right\|_2$.

\vspace{-0.2cm}
\section{Lossy Federated Learning (LFL) Algorithm}
\label{SecLFL}
%We consider a lossy PS-to-device transmission, in which the PS shares a compressed information about the global model with the devices at each iteration.
We consider a lossy PS-to-device transmission, in which the PS sends a compressed version of the global model to the devices.
%This reduces the cost of communication from the PS to the devices, and can be particularly beneficial when the PS resources, such as power and bandwidth, are limited, and/or communication takes place over a constrained bandwidth medium. 
%This reduces the communication cost from the PS to the devices, and can be particularly beneficial when the PS resources are limited, and/or communication takes place over a constrained bandwidth medium.
This reduces the communication cost, and can be particularly beneficial when the PS resources are limited, and/or communication takes place over a constrained bandwidth medium.
We denote the estimate of the global model ${\boldsymbol{\theta}} (t)$ at the devices by $\wh{\boldsymbol{\theta}} (t)$, where $t$ represents the global iteration count. 
Having recovered $\wh{\boldsymbol{\theta}} (t)$, the devices perform a $\tau$-step SGD with respect to their local datasets, and transmit their local model updates to the PS using quantization while accumulating the quantization error.

\vspace{-0cm}
\subsection{Global Model Broadcasting}\label{SubSecBroadcast}
In the proposed LFL algorithm, the PS performs stochastic quantization similarly to the QSGD algorithm introduced in \cite{DCAlistarhQSGD} with a slight modification to broadcast the information about the global model to the devices. 
In particular, at global iteration $t$, the PS aims to broadcast the global model update $\boldsymbol{\theta} (t) - \wh{\boldsymbol{\theta}} (t-1)$ to the devices.
%, where we will show that the PS knows about $\wh{\boldsymbol{\theta}} (t-1)$ with the LFL algorithm. 
We present the stochastic quantization technique we use, denoted by $Q(\cdot, \cdot)$, in Appendix \ref{AppStochQuan}.

\begin{lemma}\label{LemDigQuanVar}
For the quantization function $\varphi \left( x, q \right)$ and vector $\boldsymbol{Q}(\boldsymbol{x}, q)$ given in (\ref{DefQ}) and (\ref{QVecDefinition}), respectively, we have
\begin{subequations}
\begin{align}\label{Expectation_Lemma_phi}
& \mathbb{E}_{\varphi} \left[  \varphi(x, q) \right] = x, \quad
\mathbb{E}_{\varphi} \big[ \varphi^2(x, q) \big] \le x^2 + 1/(4q^2), \\ 
& \mathbb{E}_{\varphi} \left[  \boldsymbol{Q}(\boldsymbol{x}, q) \right] = \boldsymbol{x}, \quad \mathbb{E}_{\varphi} \big[  \left\| \boldsymbol{Q}(\boldsymbol{x}, q) \right\|^2_2 \big] \le \left\| \boldsymbol{x} \right\|_2^2 + \varepsilon d \left\| \boldsymbol{x} \right\|_2^2/(4 q^2),
\end{align}
\end{subequations}
where $\mathbb{E}_{\varphi}$ represents expectation with respect to the quatization function $\varphi \left( \cdot, \cdot \right)$, and $0 \le \varepsilon \le 1$ is defined as $\varepsilon \triangleq \left( \max \left\{ \left| \boldsymbol{x} \right| \right\} - \min \left\{ \left| \boldsymbol{x} \right| \right\} \right)^2 /  \left\| \boldsymbol{x} \right\|_2^2$. 
\end{lemma}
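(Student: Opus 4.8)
The plan is to settle the two scalar identities first and then lift them coordinatewise to the vector statements, using that $\boldsymbol{Q}(\boldsymbol{x},q)$ applies an independent copy of $\varphi(\cdot,q)$ to each coordinate of $\boldsymbol{x}$ after a common shift and rescaling. For the scalar part I would read off from the definition in (\ref{DefQ}) that $\varphi(x,q)$ is a stochastic rounding of $x$ onto the two nearest points of a uniform grid of spacing $1/q$, where the two rounding probabilities are chosen precisely so that the conditional mean equals $x$; writing $\mathbb{E}_{\varphi}[\varphi(x,q)]$ as the probability-weighted average of those two grid points then gives $\mathbb{E}_{\varphi}[\varphi(x,q)] = x$ at once. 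For the second moment I would use the bias--variance split $\mathbb{E}_{\varphi}[\varphi^2(x,q)] = \big(\mathbb{E}_{\varphi}[\varphi(x,q)]\big)^2 + \mathrm{Var}_{\varphi}\big(\varphi(x,q)\big) = x^2 + \mathrm{Var}_{\varphi}\big(\varphi(x,q)\big)$, and then bound the variance: since $\varphi(x,q)$ takes only two values at distance $\delta \le 1/q$, say with probabilities $1-p$ and $p$, its variance is $\delta^2 p(1-p) \le \delta^2/4 \le 1/(4q^2)$, which yields $\mathbb{E}_{\varphi}[\varphi^2(x,q)] \le x^2 + 1/(4q^2)$.

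To pass to the vector statements, I would note that by (\ref{QVecDefinition}) each coordinate $Q_i$ of $\boldsymbol{Q}(\boldsymbol{x},q)$ is an affine image of an independent copy of the scalar quantizer applied after shifting by $\min\{|\boldsymbol{x}|\}$ and scaling by the dynamic range $r \triangleq \max\{|\boldsymbol{x}|\} - \min\{|\boldsymbol{x}|\}$; since affine maps preserve unbiasedness, $\mathbb{E}_{\varphi}[Q_i] = x_i$ for every $i$, hence $\mathbb{E}_{\varphi}[\boldsymbol{Q}(\boldsymbol{x},q)] = \boldsymbol{x}$. By coordinatewise independence, $\mathbb{E}_{\varphi}\big[\|\boldsymbol{Q}(\boldsymbol{x},q)\|_2^2\big] = \sum_{i=1}^{d} \mathbb{E}_{\varphi}[Q_i^2] = \sum_{i=1}^{d}\big(x_i^2 + \mathrm{Var}_{\varphi}(Q_i)\big)$, and because the per-coordinate rounding error now lives on a grid of spacing $r/q$, the scalar bound gives $\mathrm{Var}_{\varphi}(Q_i) \le r^2/(4q^2)$. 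Summing the $d$ identical contributions and substituting $\varepsilon = r^2/\|\boldsymbol{x}\|_2^2$ gives $\mathbb{E}_{\varphi}\big[\|\boldsymbol{Q}(\boldsymbol{x},q)\|_2^2\big] \le \|\boldsymbol{x}\|_2^2 + \varepsilon d \|\boldsymbol{x}\|_2^2/(4q^2)$; finally $0 \le \varepsilon \le 1$ follows from $r^2 \le \max\{|\boldsymbol{x}|\}^2 \le \|\boldsymbol{x}\|_2^2$.

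The only genuinely delicate point is matching the scaling in (\ref{QVecDefinition}) exactly --- confirming that each coordinate is rounded on a grid of spacing $r/q$, rather than $\max\{|\boldsymbol{x}|\}/q$ or $\|\boldsymbol{x}\|_2/q$ --- since this is precisely what makes the factor $\varepsilon$, rather than a cruder constant, appear in the variance term. Once the spacing is read off correctly, the remaining work is just the elementary two-point variance bound $p(1-p) \le 1/4$ together with linearity of expectation, so I expect no real obstacle beyond careful bookkeeping.
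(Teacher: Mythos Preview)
Your proposal is correct and follows essentially the same route as the paper's proof: the paper computes $\mathbb{E}_{\varphi}[\varphi^2(x,q)]$ directly and simplifies to $x^2 + q^{-2}(xq-l)(1-xq+l)$ before invoking $(xq-l)(1-xq+l)\le 1/4$, whereas you phrase the same step as a bias--variance split plus the two-point variance bound $\delta^2 p(1-p)\le 1/(4q^2)$, and both then lift to $\boldsymbol{Q}$ coordinatewise and substitute $\varepsilon = (x_{\max}-x_{\min})^2/\|\boldsymbol{x}\|_2^2$. One small remark: you do not actually need coordinatewise independence for $\mathbb{E}_{\varphi}[\|\boldsymbol{Q}\|_2^2]=\sum_i \mathbb{E}_{\varphi}[Q_i^2]$ --- linearity of expectation suffices, which is all the paper uses.
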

%\begin{proof}
%See Appendix \ref{AppProofLemma}. 
%\end{proof}

The proof of Lemma \ref{LemDigQuanVar} is provided in Appendix \ref{AppProofLemma}.
%\begin{align}\label{EpsilonIneq}
%\varepsilon \triangleq \left(x_{\rm{max}} - x_{\rm{min}}\right)^2 /  \left\| \boldsymbol{x} \right\|_2^2.
%%\varepsilon \triangleq \frac{\left(x_{\rm{max}} - x_{\rm{min}}\right)^2}{\left\| \boldsymbol{x} \right\|_2^2}.
%\end{align}
We highlight that the value of $\varepsilon$ depends on the skewness of the magnitudes of the entries of $\boldsymbol{x}$, where it increases for a more skewed entries with a higher variance.  
We have $\varepsilon = 0$, if and only if all the entries of $\boldsymbol{x}$ have the same magnitude, and $\varepsilon = 1$, if and only if $\boldsymbol{x}$ has only one non-zero entry.

\begin{algorithm}[t]
\caption{LFL}
\label{ModelUpdateAlg}
\begin{algorithmic}[1]
\Statex%\State{\textbf{Initialize} $\boldsymbol{\theta} (0)$}
\For {$t = 0, \ldots, T-1$}
\Statex
\begin{itemize}
\item \textbf{Global model broadcasting}
\end{itemize}
\State{PS broadcasts $\boldsymbol{Q}\big(\boldsymbol{\theta} (t) - \wh{\boldsymbol{\theta}} (t-1), q_1\big)$}
\State{$\wh{\boldsymbol{\theta}} (t) = \wh{\boldsymbol{\theta}} (t-1) + \boldsymbol{Q}\big(\boldsymbol{\theta} (t) - \wh{\boldsymbol{\theta}} (t-1), q_1\big)$}
\Statex
\begin{itemize}
\item \textbf{Local update aggregation}
\end{itemize}
\For {$m = 1, \dots, M$ in parallel}
\State{Device $m$ transmits $\boldsymbol{Q}\big( \Delta \boldsymbol{\theta}_m (t) + \boldsymbol{\delta}_m (t), q_2 \big) = \boldsymbol{Q}\big( \boldsymbol{\theta}_m^{\tau+1} (t) - \wh{\boldsymbol{\theta}} (t) + \boldsymbol{\delta}_m (t), q_2 \big)$}
\EndFor
\State{$\boldsymbol{\theta} (t+1) = \wh{\boldsymbol{\theta}} (t) + \sum\nolimits_{m = 1}^M \frac{B_m}{B} \boldsymbol{Q}\big( \Delta \boldsymbol{\theta}_m (t) + \boldsymbol{\delta}_m (t), q_2 \big)$}
\EndFor
\end{algorithmic}\label{Dig_Downlink}
\end{algorithm}

Given a quantization level $q_1$, the PS broadcasts $\boldsymbol{Q}\big(\boldsymbol{\theta} (t) - \wh{\boldsymbol{\theta}} (t-1), q_1\big)$ to the devices at global iteration $t$. 
Then the devices obtain the following estimate of $\boldsymbol{\theta} (t)$:
\begin{align}\label{ThetaHat_t}
\wh{\boldsymbol{\theta}} (t) = \wh{\boldsymbol{\theta}} (t-1) + \boldsymbol{Q}\big(\boldsymbol{\theta} (t) - \wh{\boldsymbol{\theta}} (t-1), q_1\big),   
\end{align}
which is equivalent to $\wh{\boldsymbol{\theta}} (t) = {\boldsymbol{\theta}} (0) + \sum\nolimits_{i=1}^{t} \boldsymbol{Q}\big(\boldsymbol{\theta} (i) - \wh{\boldsymbol{\theta}} (i-1), q_1\big)$,
where we assumed that $\wh{\boldsymbol{\theta}} (0) = {\boldsymbol{\theta}} (0)$.
We note that, having the knowledge of the compressed vector $\boldsymbol{Q}\big(\boldsymbol{\theta} (i) - \wh{\boldsymbol{\theta}} (i-1), q_1\big)$, $\forall i \in [t]$, the PS can also track $\wh{\boldsymbol{\theta}} (t)$ at each iteration.

\subsection{Local Update Aggregation}\label{SubSecAggregation} 

%We consider a general case, where the PS updates the global model during global iteration $t$ after receiving the local model updates from $K(t)$ devices selected at random, for some $K(t) \le M$. 
%We denote the set of devices participating in training at iteration $t$ by $\mathcal{M} (t)$, i.e., $K(t) = \left| \mathcal{M} (t) \right|$.   
After recovering $\wh{\boldsymbol{\theta}} (t)$, device $m$ performs a $\tau$-step local SGD, where the $i$-th step corresponds to 
$\boldsymbol{\theta}_m^{i+1} (t) = \boldsymbol{\theta}_m^i (t) - \eta^i_m (t) \nabla F_m \left( \boldsymbol{\theta}_m^i (t), \xi_m^i (t) \right)$, $i \in [\tau]$,
%the following update:
%\begin{align}\label{D_ithStepSGDDevicem}
%\boldsymbol{\theta}_m^{i+1} (t) = \boldsymbol{\theta}_m^i (t) - \eta^i_m (t) \nabla F_m \left( \boldsymbol{\theta}_m^i (t), \xi_m^i (t) \right),  \quad \mbox{$i \in [\tau]$},   
%\end{align}
where $\boldsymbol{\theta}_m^1 (t) = \wh{\boldsymbol{\theta}} (t)$, and $\xi_m^i (t)$ denotes the local mini-batch chosen uniformly at random from the local dataset $\mathcal{B}_m$.
It then aims to transmit local model update $\Delta \boldsymbol{\theta}_m (t) = \boldsymbol{\theta}_m^{\tau+1} (t) - \wh{\boldsymbol{\theta}} (t)$ through quantization with error compensation.
It performs quantization after error compensation accumulating the quantization error, and transmits $\boldsymbol{Q}\big( \Delta \boldsymbol{\theta}_m (t) + \boldsymbol{\delta}_m (t), q_2 \big)$ using a quantization level $q_2$, where $\boldsymbol{\delta}_m (t)$ retains the quantization error, and is updated as 
\begin{align}
\boldsymbol{\delta}_m (t+1) = \Delta \boldsymbol{\theta}_m (t) + \boldsymbol{\delta}_m (t) - \boldsymbol{Q}\big( \Delta \boldsymbol{\theta}_m (t) + \boldsymbol{\delta}_m (t), q_2 \big),   
\end{align}
where we set $\boldsymbol{\delta}_m (0) = \boldsymbol{0}$.
%using the sparse binary compression (SBC) technique introduced in \cite{DCSattlerSparseBinary}. 
%\paragraph{SBC technique}  
%Given vector $\boldsymbol{x} \in \mathbb{R}^d$ and a sparsification level $s \le d/2$, we denote the output of SBC technique by $S(\boldsymbol{x}, s)$. With SBC, each device sets all but the highest $s$ and the smallest $s$ entries of $\boldsymbol{x}$ to zero. It then computes the mean values of all the remaining positive and negative entries, denoted by $x^{+}$ and $x^{-}$, respectively. If $x^{+} > \left| x^{-} \right|$, it sets all the entries with negative values to zero and all the entries with positive values to $x^{+}$, and vice versa. Each device requires $R_S = 33 + \log_2 \binom{d}{s}$ bits to send $S(\boldsymbol{x}, s)$, where $33$ bits represent either $x^{+}$ or $x^{-}$, and $\log_2\binom{d}{s}$ bits are used for the locations of the non-zero entries.  
Having received $\boldsymbol{Q}\big( \Delta \boldsymbol{\theta}_m (t) + \boldsymbol{\delta}_m (t), q_2 \big)$ from device $m$, $\forall m \in [M]$, the PS updates the global model as 
\begin{align}
%\boldsymbol{\theta} (t+1) = \wh{\boldsymbol{\theta}} (t) + \frac{M}{K(t)} \sum\nolimits_{m\in \mathcal{M}(t)} \frac{B_m}{B} Q\big( \Delta \boldsymbol{\theta}_m (t), q_2 \big)
\boldsymbol{\theta} (t+1) = \wh{\boldsymbol{\theta}} (t) + \sum\nolimits_{m = 1}^M \frac{B_m}{B} \boldsymbol{Q}\big( \Delta \boldsymbol{\theta}_m (t) + \boldsymbol{\delta}_m (t), q_2 \big).
\end{align}
%It can be proved that $\mathbb{E}_{\mathcal{M}(t)} \left[ \frac{M}{K(t)} \sum\nolimits_{m\in \mathcal{M}(t)} \frac{B_m}{B} \right] = \sum\nolimits_{m=1}^{M} \frac{B_m}{B}$, where $\mathbb{E}_{\mathcal{M}(t)}$ denotes expectation over the set of participating devices $\mathcal{M}(t)$ selected at random. 

Algorithm \ref{ModelUpdateAlg} summarizes the proposed LFL algorithm.

\begin{remark}
We do not consider error compensation at the PS with LFL since we have observed performance degradation numerically when compensating the quantization error at the PS.
We argue that LFL naturally accumulates the quantization error at the PS since it sends the quatized global model update with respect to the last global model estimate at the devices.
We further highlight that the proposed approach is not limited to any specific quantization technique, and any compression technique can be used within the proposed framework. 
\end{remark}

\section{Convergence Analysis of LFL Algorithm}\label{SecConvergence}

Here we analyze the convergence behaviour of LFL, where for simplicity of the analysis, we assume that the devices can transmit their local updates, $\Delta \boldsymbol{\theta}_m (t)$, $\forall m$, accurately/in a lossless fashion to the PS, and focus on the impact of lossy broadcasting on the convergence.

\subsection{Preliminaries}\label{SebSecPreliminary}

We denote the optimal solution minimizing loss function $F(\boldsymbol{\theta})$ by $\boldsymbol{\theta}^*$, and the minimum loss as $F^*$, i.e., $\boldsymbol{\theta}^* \triangleq \arg \mathop {\min }\nolimits_{\boldsymbol{\theta}} F(\boldsymbol{\theta})$, and $F^* \triangleq F(\boldsymbol{\theta}^*)$. We also denote the minimum value of the local loss function at device $m$ by $F_m^*$. We further define $\Gamma \triangleq F^* -  \sum\nolimits_{m=1}^{M} \frac{B_m}{B} F^*_m$, where $\Gamma \ge 0$, and its magnitude indicates the bias in the data distribution across devices. 

For ease of analysis, we set $\eta_m^i (t) = \eta(t)$. Thus, the $i$-th step SGD at device $m$ is given by
\begin{align}\label{ConvSGDDevicem}
\boldsymbol{\theta}_m^{i+1} (t) = \boldsymbol{\theta}_m^i (t) - \eta (t) \nabla F_m \left( \boldsymbol{\theta}_m^i (t), \xi_m^i (t) \right),  \quad \mbox{$i \in [\tau]$}, \mbox{$m \in [M]$},   
\end{align}
where $\boldsymbol{\theta}_m^1 (t) = \wh{\boldsymbol{\theta}} (t)$, given in (\ref{ThetaHat_t}). Device $m$ transmits the local model update
\begin{align}
\Delta \boldsymbol{\theta}_m (t) = \boldsymbol{\theta}_m^{\tau+1} (t) - \wh{\boldsymbol{\theta}} (t) = - \eta (t) \sum\nolimits_{i=1}^{\tau} \nabla F_m \left( \boldsymbol{\theta}_m^i (t), \xi_m^i (t) \right), \quad  m \in [M],   
\end{align}
%$\Delta \boldsymbol{\theta}_m (t) = \boldsymbol{\theta}_m^{\tau+1} (t) - \wh{\boldsymbol{\theta}} (t) = - \eta (t) \sum\nolimits_{i=1}^{\tau} \nabla F_m \left( \boldsymbol{\theta}_m^i (t), \xi_m^i (t) \right)$, $m \in [M]$, 
and the PS updates the global model as
\begin{align}\label{GlobalModelUpdateConv}
\boldsymbol{\theta} (t+1) =  \wh{\boldsymbol{\theta}} (t) - \eta (t) \sum\nolimits_{m = 1}^M \sum\nolimits_{i=1}^{\tau} \frac{B_m}{B} \nabla F_m \left( \boldsymbol{\theta}_m^i (t), \xi_m^i (t) \right).
\end{align}

\begin{assumption}\label{AssumpSmoothLoss}
The loss functions $F_1, \dots, F_M$ are $L$-smooth; that is, $\forall \boldsymbol{v}, \boldsymbol{w} \in \mathbb{R}^d$, 
\begin{align}\label{ConvLSmoothCondit}
2 \big(F_m(\boldsymbol{v}) - F_m(\boldsymbol{w}) \big) \le 2 \langle \boldsymbol{v} - \boldsymbol{w} , \nabla F_m (\boldsymbol{w}) \rangle + L \left\| \boldsymbol{v} - \boldsymbol{w} \right\|^2_2, \quad \forall m \in [M].
\end{align}
\end{assumption}

\begin{assumption}\label{AssumpStrongConvexLoss}
The loss functions $F_1, \dots, F_M$ are $\mu$-strongly convex; that is, $\forall \boldsymbol{v}, \boldsymbol{w} \in \mathbb{R}^d$, 
\begin{align}\label{ConvMuStConvexCondit}
2 \big(F_m(\boldsymbol{v}) - F_m(\boldsymbol{w}) \big) \ge 2 \langle \boldsymbol{v} - \boldsymbol{w} , \nabla F_m (\boldsymbol{w}) \rangle + \mu \left\| \boldsymbol{v} - \boldsymbol{w} \right\|^2_2, \quad \forall m \in [M].      
\end{align}
\end{assumption}

\begin{assumption}\label{AssumpBoundedVarGradient}
The expected squared $l_2$-norm of the stochastic gradients are bounded, i.e.,
\begin{align}\label{ConvNorm2Bound}
\mathbb{E}_{\xi} \left [ \left\| \nabla F_m \left( \boldsymbol{\theta}_m^i (t), \xi_m^i (t) \right) \right\|^2_2 \right] \le G^2, \quad \forall i \in [\tau], \forall m \in [M], \; \forall t.      
\end{align}
\end{assumption}

\subsection{Convergence Rate}\label{SebSecConvRate}

In the following theorem, whose proof is provided in Appendix \ref{AppProofTheorem}, we present the convergence rate of the LFL algorithm assuming that the devices can send their local updates accurately.

\begin{theorem}\label{Theoremtheta_thetastarKequalM}
Let $0 < \eta(t) \le \min \big\{ 1, \frac{1}{\mu \tau} \big\}$, $\forall t$. We have
\begin{subequations}\label{ConvTheoremtheta_thetastarKequalM}
\begin{align}\label{ConvTheoremtheta_thetastarKequalM_main}
\mathbb{E} \big[ \left\| \boldsymbol{\theta} (t) - {\boldsymbol{\theta}}^* \right\|_2^2 \big] \le  \Big( \prod\nolimits_{i=0}^{t-1} A(i) \Big) \left\| {\boldsymbol{\theta}} (0) - {\boldsymbol{\theta}}^* \right\|_2^2 + \sum\nolimits_{j=0}^{t-1} B(j) \prod\nolimits_{i=j+1}^{t-1} A(i),  
\end{align}
where 
\begin{align}\label{ConvTheoremtheta_thetastarKequalM_AB}
A(i) \triangleq & 1 - \mu \eta (i) \left( \tau - \eta(i) (\tau - 1) \right),\\ 
B(i) \triangleq & \left( 1 - \mu \eta (i) \left( \tau - \eta(i) (\tau - 1) \right) \right) \Big(\frac{\eta(i-1) \tau G}{2q_1}\Big)^2 \varepsilon d + \eta^2(i) (\tau^2 + \tau-1) G^2  \nonumber\\
&+ \left( 1+ \mu (1- \eta(i)) \right) \eta^2(i) G^2 \frac{\tau (\tau-1)(2\tau-1)}{6}   + 2 \eta(i) (\tau - 1) \Gamma,
%B(i) & \triangleq  \frac{(M-K(i)) \eta^2(i) \tau^2 G^2}{K(i)(M-1)} + \left( 1 - \mu \eta (i) \left( \tau - \eta(i) (\tau - 1) \right) \right) \Big(\frac{\eta(i-1) \tau G}{2q_1}\Big)^2 \varepsilon d   \nonumber\\
%&+ \eta^2(i) (\tau^2 + \tau-1) G^2+ \left( 1+ \mu (1- \eta(i)) \right) \eta^2(i) G^2 \frac{\tau (\tau-1)(2\tau-1)}{6}   + 2 \eta(i) (\tau - 1) \Gamma, 
\end{align}
\end{subequations}
for some $0 \le \varepsilon \le 1$, and the expectation is with respect to the stochastic gradient function and stochastic quantization.
\end{theorem}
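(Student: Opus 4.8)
The plan is to establish the one-step recursion $\mathbb{E}\big[\|\boldsymbol{\theta}(t+1) - \boldsymbol{\theta}^*\|_2^2\big] \le A(t)\,\mathbb{E}\big[\|\boldsymbol{\theta}(t) - \boldsymbol{\theta}^*\|_2^2\big] + B(t)$ with $A(t),B(t)$ as in (\ref{ConvTheoremtheta_thetastarKequalM_AB}), and then unroll it from $t=0$; since $A(i)\ge 0$ under the assumed step-size restriction, iterating the recursion gives exactly (\ref{ConvTheoremtheta_thetastarKequalM_main}). The natural decomposition comes from (\ref{GlobalModelUpdateConv}): writing $\boldsymbol{\theta}(t+1) - \boldsymbol{\theta}^* = \wh{\boldsymbol{\theta}}(t) - \boldsymbol{\theta}^* - \eta(t)\sum_{m}\sum_{i}\frac{B_m}{B}\nabla F_m\big(\boldsymbol{\theta}_m^i(t),\xi_m^i(t)\big)$, I would expand the squared norm and take expectation over the mini-batches: unbiasedness of the stochastic gradients removes the noise from the linear term, and the quadratic term is bounded by $\eta^2(t)\tau^2 G^2$ via Jensen's inequality and Assumption \ref{AssumpBoundedVarGradient}. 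What remains is the cross term $-2\eta(t)\sum_m\sum_i\frac{B_m}{B}\langle \wh{\boldsymbol{\theta}}(t) - \boldsymbol{\theta}^*, \nabla F_m(\boldsymbol{\theta}_m^i(t))\rangle$, conditioned on $\wh{\boldsymbol{\theta}}(t)$.

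To bound the cross term I split each inner product as $\langle \boldsymbol{\theta}_m^i(t) - \boldsymbol{\theta}^*, \nabla F_m(\boldsymbol{\theta}_m^i(t))\rangle + \langle \wh{\boldsymbol{\theta}}(t) - \boldsymbol{\theta}_m^i(t), \nabla F_m(\boldsymbol{\theta}_m^i(t))\rangle$. The first term is handled by $\mu$-strong convexity (Assumption \ref{AssumpStrongConvexLoss}), producing $-2\eta(t)\big(F_m(\boldsymbol{\theta}_m^i(t)) - F_m(\boldsymbol{\theta}^*)\big) - \mu\eta(t)\|\boldsymbol{\theta}_m^i(t) - \boldsymbol{\theta}^*\|_2^2$; I then relate $\|\boldsymbol{\theta}_m^i(t) - \boldsymbol{\theta}^*\|_2^2$ back to $\|\wh{\boldsymbol{\theta}}(t) - \boldsymbol{\theta}^*\|_2^2$ by Young's inequality with a parameter tuned in $\eta(t)$, so that a factor $(1-\eta(t))$ of the contraction is retained for $i\ge 2$ while the remainder is absorbed into the local drift $\|\wh{\boldsymbol{\theta}}(t) - \boldsymbol{\theta}_m^i(t)\|_2^2$ with coefficient $\mu(1-\eta(t))$. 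Combined with the $i=1$ contribution (where $\boldsymbol{\theta}_m^1(t) = \wh{\boldsymbol{\theta}}(t)$, no drift, contraction factor $\mu\eta(t)$) and summed over the $\tau$ inner steps, this yields exactly $A(t) = 1 - \mu\eta(t)\big(\tau - \eta(t)(\tau-1)\big)$ in front of $\|\wh{\boldsymbol{\theta}}(t) - \boldsymbol{\theta}^*\|_2^2$. The second term is bounded by Cauchy--Schwarz and a weighted AM--GM with weight $1/\eta(t)$, giving $\|\wh{\boldsymbol{\theta}}(t) - \boldsymbol{\theta}_m^i(t)\|_2^2 + \eta^2(t)\|\nabla F_m(\boldsymbol{\theta}_m^i(t))\|_2^2$. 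For $i\ge 2$ the drift equals $\eta^2(t)\|\sum_{j<i}\nabla F_m(\boldsymbol{\theta}_m^j(t),\xi_m^j(t))\|_2^2$, which is at most $\eta^2(t)(i-1)^2 G^2$ in expectation (Jensen and Assumption \ref{AssumpBoundedVarGradient}); summing $\sum_{i}(i-1)^2 = \tau(\tau-1)(2\tau-1)/6$ gives the drift terms with coefficient $1+\mu(1-\eta(t))$, while the leftover $\eta^2(t)\|\nabla F_m\|_2^2$ terms contribute the extra $\tau-1$ in $(\tau^2+\tau-1)$. Finally, the function-value gaps are converted using $\sum_m\frac{B_m}{B}F_m(\boldsymbol{\theta}) = F(\boldsymbol{\theta})$, $\sum_m\frac{B_m}{B}F_m(\boldsymbol{\theta}^*) = F^*$, and $F_m(\boldsymbol{\theta}_m^i(t)) \ge F_m^*$: the $i=1$ term becomes $-2\eta(t)(F(\wh{\boldsymbol{\theta}}(t)) - F^*)\le 0$ and is dropped, and for $i=2,\dots,\tau$ the slack produces $2\eta(t)(\tau-1)\Gamma$. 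The constraint $\eta(t)\le\min\{1,1/(\mu\tau)\}$ is precisely what makes $1-\eta(t)\ge 0$ (needed for the drift coefficient) and $A(t)\ge 0$.

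This yields, conditioned on $\wh{\boldsymbol{\theta}}(t)$, the bound $\mathbb{E}_\xi\big[\|\boldsymbol{\theta}(t+1) - \boldsymbol{\theta}^*\|_2^2\big] \le A(t)\|\wh{\boldsymbol{\theta}}(t) - \boldsymbol{\theta}^*\|_2^2 + \big(\text{last three terms of }B(t)\big)$, the extra terms being deterministic. I then take expectation over the downlink quantization at iteration $t$: by the unbiasedness part of Lemma \ref{LemDigQuanVar}, $\mathbb{E}_\varphi\|\wh{\boldsymbol{\theta}}(t) - \boldsymbol{\theta}^*\|_2^2 = \|\boldsymbol{\theta}(t) - \boldsymbol{\theta}^*\|_2^2 + \mathbb{E}_\varphi\|\wh{\boldsymbol{\theta}}(t) - \boldsymbol{\theta}(t)\|_2^2$, and by the variance part of Lemma \ref{LemDigQuanVar} applied to $\boldsymbol{Q}\big(\boldsymbol{\theta}(t) - \wh{\boldsymbol{\theta}}(t-1), q_1\big)$ together with $\boldsymbol{\theta}(t) - \wh{\boldsymbol{\theta}}(t-1) = -\eta(t-1)\sum_m\sum_i\frac{B_m}{B}\nabla F_m(\cdot,\cdot)$, one obtains $\mathbb{E}\|\wh{\boldsymbol{\theta}}(t) - \boldsymbol{\theta}(t)\|_2^2 \le (\eta(t-1)\tau G/(2q_1))^2\varepsilon d$. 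Multiplying by $A(t)\ge 0$ produces the remaining term of $B(t)$, and taking total expectation gives the claimed recursion; unrolling it from $t=0$, using $\wh{\boldsymbol{\theta}}(0) = \boldsymbol{\theta}(0)$ so that no quantization error enters the first step, gives (\ref{ConvTheoremtheta_thetastarKequalM}).

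The main obstacle is the bookkeeping in the cross-term step: choosing the Young's/AM--GM parameters so that the contraction collapses precisely to $A(i)$ rather than to a looser factor, so that the $F-F^*$ contributions are exactly cancelled or rendered non-positive (not merely bounded above), and so that the $G^2$, $\Gamma$, and quantization residuals aggregate into exactly the stated $B(i)$ --- in particular, correctly isolating the $i=1$ step (no drift, clean $F(\wh{\boldsymbol{\theta}}(t)) - F^*$) from the $i\ge 2$ steps that carry both the drift and the $\Gamma$ bias. Once Lemma \ref{LemDigQuanVar} is available, folding in the downlink quantization is routine.
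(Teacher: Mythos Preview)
Your proposal is correct and follows essentially the same route as the paper's proof: the same expansion of $\|\boldsymbol{\theta}(t+1)-\boldsymbol{\theta}^*\|_2^2$ around $\wh{\boldsymbol{\theta}}(t)$, the same $i=1$ vs.\ $i\ge 2$ split of the cross term, the same Cauchy--Schwarz/AM--GM with weight $1/\eta(t)$ and the same Young-type relation $-\|\boldsymbol{\theta}_m^i(t)-\boldsymbol{\theta}^*\|_2^2 \le -(1-\eta(t))\|\wh{\boldsymbol{\theta}}(t)-\boldsymbol{\theta}^*\|_2^2 + (1/\eta(t)-1)\|\boldsymbol{\theta}_m^i(t)-\wh{\boldsymbol{\theta}}(t)\|_2^2$, producing exactly the coefficients $A(i)$ and the four pieces of $B(i)$. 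Your handling of the downlink quantization via the bias--variance decomposition $\mathbb{E}\|\wh{\boldsymbol{\theta}}(t)-\boldsymbol{\theta}^*\|_2^2 = \mathbb{E}\|\boldsymbol{\theta}(t)-\boldsymbol{\theta}^*\|_2^2 + \mathbb{E}\|\wh{\boldsymbol{\theta}}(t)-\boldsymbol{\theta}(t)\|_2^2$ is the content of the paper's Lemma~\ref{Lemma_ThetaHatTheta}, just stated a bit more directly.
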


\begin{corollary}
From the $L$-smoothness of the loss function, for $0 < \eta(t) \le \min \big\{ 1, \frac{1}{\mu \tau} \big\}$, $\forall t$, and a total of $T$ global iterations, it follows that \vspace{-.05cm}
\begin{align}\label{A_ConvF_FstarKequalM}
\mathbb{E} \left[ F( \boldsymbol{\theta} (T)) \right] - F^* \le & \frac{L}{2} \mathbb{E} \big[ \left\| \boldsymbol{\theta} (T) - {\boldsymbol{\theta}}^* \right\|_2^2 \big] \nonumber \\
\le & \frac{L}{2} \Big( \prod\nolimits_{i=0}^{T-1} A(i) \Big) \left\| {\boldsymbol{\theta}} (0) - {\boldsymbol{\theta}}^* \right\|_2^2 + \frac{L}{2} \sum\nolimits_{j=0}^{T-1} B(j) \prod\nolimits_{i=j+1}^{T-1} A (i), 
\end{align}\vspace{-.05cm}
where the last inequality follows from (\ref{ConvTheoremtheta_thetastarKequalM_main}).
%Considering a simple case $\eta(t) = \eta$, $\forall t$, and $\tau =1$, the above inequality is simplified as 
Considering $\eta(t) = \eta$ and $\tau =1$, we have
\begin{align}
\mathbb{E} \left[ F( \boldsymbol{\theta} (T)) \right] - F^* \le & \frac{L}{2} (1-\mu \eta)^T \left\| {\boldsymbol{\theta}} (0) - {\boldsymbol{\theta}}^* \right\|_2^2 \nonumber\\
& + \frac{L}{2} \Big((1-\mu \eta) \Big( \frac{\varepsilon d}{4 q_1^2} \Big) + 1 \Big)  \big(1 - (1-\mu \eta)^T \big) \Big( \frac{\eta G^2}{\mu} \Big).    
\end{align}  
\end{corollary}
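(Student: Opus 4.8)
The plan is to treat the corollary as a direct consequence of Theorem \ref{Theoremtheta_thetastarKequalM}, proceeding in two stages that match the two displayed chains of inequalities.

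First I would establish the opening bound $\mathbb{E}[F(\boldsymbol{\theta}(T))] - F^* \le \frac{L}{2}\mathbb{E}\big[\|\boldsymbol{\theta}(T) - \boldsymbol{\theta}^*\|_2^2\big]$ from $L$-smoothness. Since $F = \sum_{m} \frac{B_m}{B} F_m$ is a convex combination of the $L$-smooth functions $F_1,\dots,F_M$, the global loss $F$ is itself $L$-smooth, so Assumption \ref{AssumpSmoothLoss} applied to $F$ with $\boldsymbol{v} = \boldsymbol{\theta}(T)$ and $\boldsymbol{w} = \boldsymbol{\theta}^*$ gives $2\big(F(\boldsymbol{\theta}(T)) - F^*\big) \le 2\langle \boldsymbol{\theta}(T) - \boldsymbol{\theta}^*, \nabla F(\boldsymbol{\theta}^*)\rangle + L\|\boldsymbol{\theta}(T) - \boldsymbol{\theta}^*\|_2^2$. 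Because $\boldsymbol{\theta}^*$ minimizes $F$, the first-order optimality condition $\nabla F(\boldsymbol{\theta}^*) = \boldsymbol{0}$ annihilates the inner-product term; dividing by two and taking expectations yields the first inequality. The second inequality then follows immediately by inserting the bound from (\ref{ConvTheoremtheta_thetastarKequalM_main}) evaluated at $t = T$ and multiplying through by $\frac{L}{2}$, which reproduces the displayed two-line estimate for general step sizes and $\tau$.

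For the closed-form specialization I would substitute $\eta(t) = \eta$ and $\tau = 1$ into the definitions of $A(i)$ and $B(i)$ in (\ref{ConvTheoremtheta_thetastarKequalM_AB}). With $\tau = 1$ the factor $\tau - \eta(i)(\tau - 1)$ equals $1$, so $A(i) = 1 - \mu\eta$ is constant in $i$. In $B(i)$, the third summand carries the factor $\tau(\tau-1)(2\tau-1)/6$ and the fourth carries $(\tau-1)$, both of which vanish at $\tau = 1$, so the $\Gamma$ term and the momentum-like variance term disappear. Using $\tau^2 + \tau - 1 = 1$ in the second summand and $\eta(i-1) = \eta$ in the first, the surviving terms combine to the constant $B = \eta^2 G^2\big((1-\mu\eta)\frac{\varepsilon d}{4q_1^2} + 1\big)$.

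Finally, with $A(i) \equiv 1 - \mu\eta$ and $B(j) \equiv B$ constant, the product $\prod_{i=0}^{T-1} A(i)$ collapses to $(1-\mu\eta)^T$, and the sum reduces to a geometric series $B\sum_{j=0}^{T-1}(1-\mu\eta)^{T-1-j} = B\sum_{k=0}^{T-1}(1-\mu\eta)^{k} = B\,\frac{1-(1-\mu\eta)^T}{\mu\eta}$, where the ratio $1-\mu\eta$ lies in $[0,1)$ under the step-size hypothesis $0 < \eta \le \min\{1, 1/\mu\}$ (the $\tau = 1$ case of the standing constraint), so the closed form is valid. Substituting $B$ and cancelling one factor of $\eta$ against the $\mu\eta$ in the denominator gives $\big((1-\mu\eta)\frac{\varepsilon d}{4q_1^2} + 1\big)\big(1-(1-\mu\eta)^T\big)\frac{\eta G^2}{\mu}$; multiplying by $\frac{L}{2}$ and adding the collapsed product term completes the proof. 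The argument is entirely mechanical, so I anticipate no real obstacle; the only points demanding care are justifying the opening inequality through global $L$-smoothness together with $\nabla F(\boldsymbol{\theta}^*) = \boldsymbol{0}$, and correctly identifying exactly which summands of $B(i)$ drop out when $\tau = 1$.
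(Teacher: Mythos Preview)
Your proposal is correct and follows exactly the approach the paper intends: the corollary is stated with its proof essentially inlined (the opening inequality is attributed to $L$-smoothness, the second to (\ref{ConvTheoremtheta_thetastarKequalM_main})), and your specialization to $\eta(t)=\eta$, $\tau=1$ correctly identifies which summands of $B(i)$ vanish and sums the resulting geometric series. There is nothing to add.
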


%\vspace{-.6cm}
\paragraph{Asymptotic convergence analysis}
Here we show that, for a decreasing learning rate over time, such that $\mathop {\lim }\nolimits_{t \to \infty } \eta(t) = 0$, and given small enough $\varepsilon$, $\mathop {\lim }\nolimits_{T \to \infty } \mathbb{E} \left[ F( \boldsymbol{\theta} (T)) \right] - F^* = 0$.
For $0 < \eta(t) \le \min \{ 1, \frac{1}{\mu \tau} \}$, we have $0 \le A(t) < 1$, and $\mathop {\lim }\nolimits_{T \to \infty} \prod\nolimits_{i=0}^{T-1} A(i) = 0$.
For simplicity, assume $\eta(t) = \frac{\alpha}{t + \beta}$, for constant values $\alpha$ and $\beta$.
For $j \gg 0$, $B(j) \to 0$, and for limited $j$ values, $\prod\nolimits_{i=j+1}^{T-1} A(i) \to 0$, and so, according to (\ref{A_ConvF_FstarKequalM}), $\mathop {\lim }\nolimits_{T \to \infty } \mathbb{E} \left[ F( \boldsymbol{\theta} (T)) \right] - F^* = 0$.

%\vspace{-.2cm}

\paragraph{Choice of $\varepsilon$}
%\begin{remark}\label{Remarl_epsilon}
We highlight that $\varepsilon$ appears in the convergence analysis of the LFL algorithm in inequality (\ref{App_ThetaHatWithA}), in which we have
\begin{align}\label{Ineq_For_eps}
\mathbb{E} & \left[ \Big( \max \Big\{ \Big| \sum\nolimits_{m = 1}^M \sum\nolimits_{i=1}^{\tau} \frac{B_m}{B} \nabla F_m \left( \boldsymbol{\theta}_m^i (t-1), \xi_m^i (t-1) \right) \Big| \Big\}  \Big. \right.  \nonumber\\
& \quad \left. \Big. - \min \Big\{ \Big| \sum\nolimits_{m = 1}^M \sum\nolimits_{i=1}^{\tau} \frac{B_m}{B} \nabla F_m \left( \boldsymbol{\theta}_m^i (t-1), \xi_m^i (t-1) \right) \Big| \Big\}  \Big)^2 \right] \nonumber\\
& \le \varepsilon \mathbb{E} \left[ \Big\| \sum\nolimits_{m = 1}^M \sum\nolimits_{i=1}^{\tau} \frac{B_m}{B} \nabla F_m \left( \boldsymbol{\theta}_m^i (t-1), \xi_m^i (t-1) \right) \Big\|_2^2 \right],
\end{align}
which follows from (\ref{Quantization_SquareExpectation}), where we note that \vspace{-.05cm}
\begin{align}\label{Theta_ThetaHat}
\boldsymbol{\theta} (t) - \wh{\boldsymbol{\theta}} (t-1) = - \eta (t-1) \sum\nolimits_{m = 1}^M \sum\nolimits_{i=1}^{\tau} \frac{B_m}{B} \nabla F_m \left( \boldsymbol{\theta}_m^i (t-1), \xi_m^i (t-1) \right).    
\end{align} \vspace{-.05cm}
On average the entries of $\boldsymbol{\theta} (t) - \wh{\boldsymbol{\theta}} (t-1)$, given in (\ref{Theta_ThetaHat}), are not expected to have very diverse magnitudes. Thus, the inequality in (\ref{Ineq_For_eps}) should hold for a relatively small value of $\varepsilon$. We have observed numerically that $\varepsilon \approx 10^{-3}$ satisfies inequality (\ref{Ineq_For_eps}) for the LFL algorithm.      
%\end{remark}

\vspace{-.25cm}
\paragraph{Impact of lossy broadcasting} The first term in $B(i)$ is due to the imperfect broadcasting of the global model update at the PS, which decreases with $q_1$ and increases linearly with $\varepsilon$.
This term is a complicated function of the number of $\tau$ depending on other setting variables. 
%As we will observe in the experiments, a smaller $\tau$ provides the best performance for relatively large and small values of $q_1$ compared to a medium $q_1$ value for non-iid data.  
%For a more biased data distribution across the devices, which results in a higher $G$ value, the destructive effect of the imperfect global model at the devices is more pronounced.

%\paragraph{Impact of number of participating devices}
%The first term in $B(i)$ has appeared as a result of partial participation of the devices, and it is a decreasing function of $K(i)$. Thus, as expected intuitively, $K(i) = M$ minimizes this term.   

%Mention about the time-varying $q_1(t)$, where it is more reasonable to increase $q_1(t)$ with time. 

%For a decreasing learning rate over time, such that $\mathop {\lim }\nolimits_{t \to \infty } \eta(t) = 0$, and given small enough $\varepsilon$, it is easy to verify that $\mathop {\lim }\nolimits_{T \to \infty } \mathbb{E} \left[ F( \boldsymbol{\theta} (T)) \right] - F^* = 0$.  

%\vspace{-.35cm}
\section{Numerical Experiments}\label{SecExperiments}
%\vspace{-.25cm}
Here we investigate the performance of the proposed LFL algorithm for image classification on both  MNIST \cite{LeCunMNIST} and CIFAR-10 \cite{Cifar10DatasetKrizhevsky} datasets
%with $60000$ training and $10000$ test samples 
utilizing ADAM optimizer \cite{ADAMDC}.
We consider $M=40$ devices, and we measure the performance as the accuracy with respect to the test samples, called \textit{test accuracy}.
%We set the size of the local mini-batch sample for each local iteration to $\left| \xi_m^i (t) \right| = 500$ and $\left| \xi_m^i (t) \right| = 250$, $\forall i, m, t$, when training using MNIST and CIFAR-10, respectively. 

%\vspace{-.25cm}
\paragraph{Network architecture} We train different convolutional neural networks (CNNs) with MNIST and CIFAR-10 datasets. The architectures of these CNNs are described in Table \ref{TableCNNArchit}.
%with 6 layers including two $5 \times 5$ convolutional layers with ReLu activation and the same padding, where the first and the second layers have 32 and 64 channels, respectively, and each has stride 1 and followed by a $2 \times 2$ max pooling layer with stride 2. It also has a fully connected layer with 1024 units and ReLu activation with dropout $0.8$ followed by a softmax output layer.

\begin{table}[t!]
\caption{CNN architecture for image classification on MNIST and CIFAR-10.}
\centering
\begin{tabular}{|c||c|}
\hline
MNIST & CIFAR-10 \\ \specialrule{.2em}{.01em}{.01em}
\multirow{3}{*}{\begin{tabular}[c]{@{}c@{}}3 $\times$ 3 convolutional layer, 32 channels, \\ ReLU activation, same padding\end{tabular}} & \begin{tabular}[c]{@{}c@{}}3 $\times$ 3 convolutional layer, 32 channels, \\ ReLU activation, same padding\end{tabular}  \\ \cline{2-2} & \begin{tabular}[c]{@{}c@{}}3 $\times$ 3 convolutional layer, 32 channels, \\ ReLU activation, same padding\end{tabular} \\ \hline
\multicolumn{2}{|c|}{2 $\times$ 2 max pooling} \\ \hline
\multirow{5}{*}{\begin{tabular}[c]{@{}c@{}}3 $\times$ 3 convolutional layer, 64 channels, \\ ReLU activation, same padding\end{tabular}} & dropout with probability 0.2 \\ \cline{2-2} & \begin{tabular}[c]{@{}c@{}}3 $\times$ 3 convolutional layer, 64 channels, \\ ReLU activation, same padding\end{tabular} \\ \cline{2-2} & \begin{tabular}[c]{@{}c@{}}3 $\times$ 3 convolutional layer, 64 channels, \\ ReLU activation, same padding\end{tabular} \\ \hline
\multicolumn{2}{|c|}{2 $\times$ 2 max pooling} \\ \hline
\multirow{5}{*}{\begin{tabular}[c]{@{}c@{}}3 $\times$ 3 convolutional layer, 64 channels, \\ ReLU activation, same padding\end{tabular}} & dropout with probability 0.3 \\ \cline{2-2} & \begin{tabular}[c]{@{}c@{}}3 $\times$ 3 convolutional layer, 128 channels, \\ ReLU activation, same padding\end{tabular} \\ \cline{2-2} & \begin{tabular}[c]{@{}c@{}}3 $\times$ 3 convolutional layer, 128 channels, \\ ReLU activation, same padding\end{tabular} \\ \hline
\multicolumn{2}{|c|}{2 $\times$ 2 max pooling} \\ \hline
\begin{tabular}[c]{@{}c@{}}fully connected layer with 128 units, \\ ReLU activation\end{tabular} & dropout with probability 0.4 \\ \hline
\multicolumn{2}{|c|}{softmax output layer with 10 units} \\ \hline
\end{tabular}
\label{TableCNNArchit}
\end{table}

%\vspace{-.25cm}
\paragraph{Data distribution} We consider two data distribution scenarios. 
In the non-iid scenario, we split the training data samples with the same label (from the same class) to $M/10$ disjoint subsets (assume that $M$ is divisible by 10). 
We then assign each subset of data samples, selected at random, to a different device.
In the iid scenario, we randomly split the training data samples to $M$ disjoint subsets, and assign each subset to a distinct device. 
We consider non-iid and iid data distributions while training using MNIST and CIFAR-10, respectively.

%\vspace{-.15cm}
%\paragraph{Convergence variables} For the analytical results on the convergence rate of the LFL algorithm, we set $\eta (t) =\min\left\{ 1, 1/(\mu \tau)\right\}/(t + 1)$, $\forall t$, and consider $M=40$ devices. We assume that $\mu = 1$, $L=10$, $\left\| {\boldsymbol{\theta}} (0) - {\boldsymbol{\theta}}^* \right\|_2^2 = 5 \times 10^3$, and $\varepsilon =  10^{-3}$. We also model the iid and non-iid scenarios by setting $(G^2, \Gamma) = (10, 5)$ and $(G^2, \Gamma) = (100, 50)$, respectively, where we note that the non-iid scenario results in higher $G$ and $\Gamma$ values.

%\vspace{-.25cm}
\paragraph{State-of-the-art approaches} We consider two approaches with lossy broadcasting introduced in \cite{ExpandingRedClientResFL} and \cite{DoubleSqueezeTangConf} as the state-of-the-art approaches.
With the scheme in \cite{ExpandingRedClientResFL}, referred to as lossy transformed global model (LTGM), the PS first employs a linear transform to project the global model. 
It then quantizes the resultant vector after the linear transform, and sends the quantized vector to the devices. 
The devices employ the inverse of the linear transform and use the recovered vector for local training. 
As suggested in \cite{ExpandingRedClientResFL}, we consider Walsh-Hadamrd transform and employ the stochastic quantization scheme presented in Appendix \ref{AppStochQuan} at the PS.
On the other hand, with the approach studied in \cite{DoubleSqueezeTangConf}, referred to as lossy global model (LGM), the PS directly quantizes the global model plus the quantization error accumulated from the previous iterations and shares the quantized global model with the devices, while updating the qunatization error. 
For fairness, we consider the quantization scheme presented in Appendix \ref{AppStochQuan} with the LGM scheme, and assume the same technique for transmission in the device-to-PS direction introduced in Section \ref{SubSecAggregation}.

%\vspace{-.25cm}
\paragraph{Benchmark approaches} We consider the performance of the lossless broadcasting (LB) scenario, where the devices receive the current global model accurately, and perform the quantization with error compensation approach as described in Section \ref{SubSecAggregation}. 
We highlight that this approach requires transmission of $R_{\rm{LB}} = 33d$ bits from the PS, where we assume that each entry of the global model is represented by $33$ bits. 
Thus, the saving ratio in the communication bits of broadcasting from the PS using LFL versus LB is
\newcommand\approximate{\mathrel{\overset{\makebox[0pt]{\mbox{\normalfont\tiny\sffamily (a)}}}{\approx}}}
\begin{align}
\frac{R_{\rm{LB}}}{R_{\rm{Q}}} = \frac{33d}{64 + d \left( 1 + \log_2(q_1+1) \right)} \approximate  \frac{33}{1 + \log_2(q_1+1)},  
\end{align}
where (a) follows assuming that $d \gg 1$.
We further consider the performance of the fully lossless approach, where in addition to having the accurate global model at the devices, we assume that the PS receives the local model updates from the devices accurately.

\begin{figure}[t!]
\centering
\begin{subfigure}{.5\textwidth}
  \centering
  \includegraphics[scale=0.43,trim={16pt 5pt 43pt 31pt},clip]{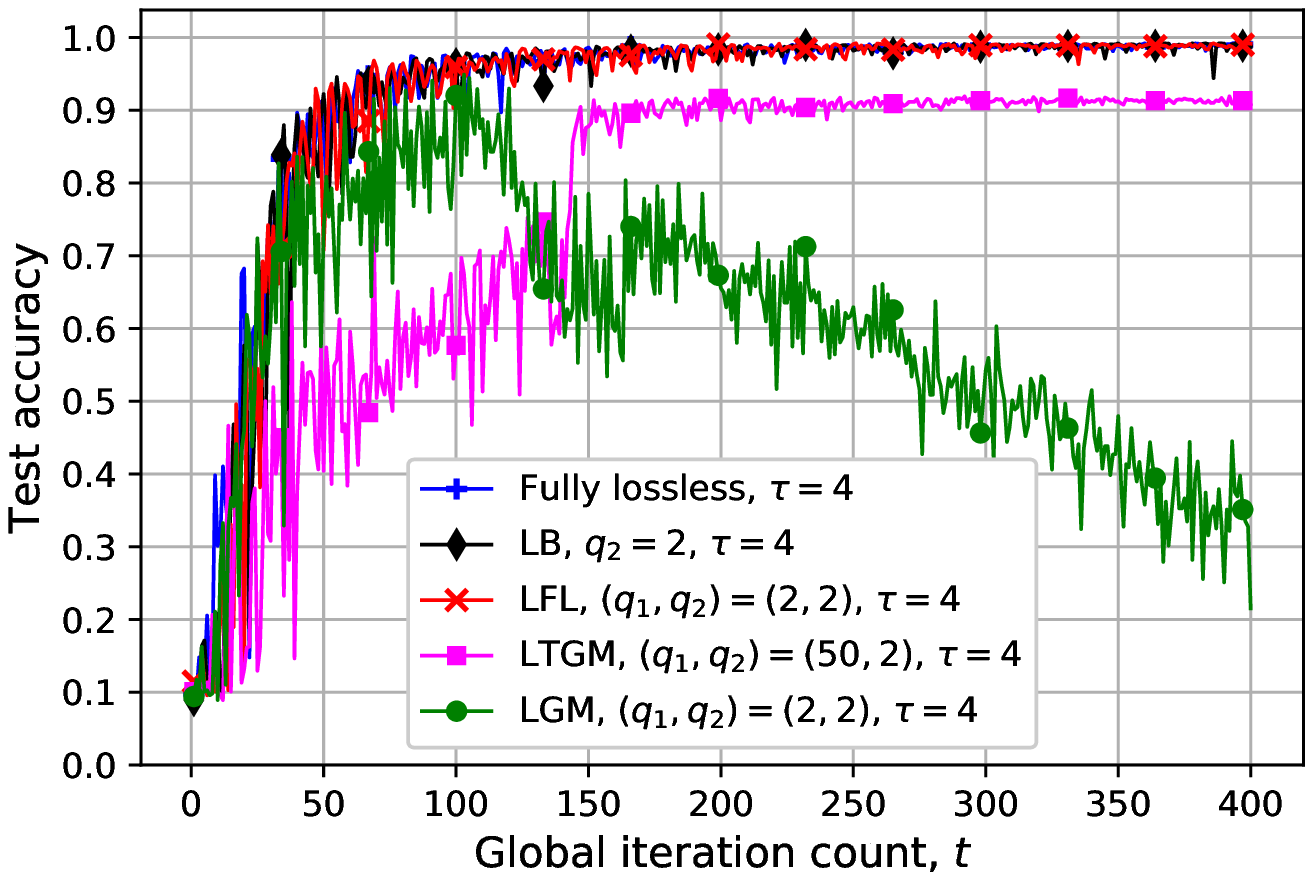}\vspace{0cm}
  \caption{MNIST with non-iid data}
  \label{Fig_MNIST}
\end{subfigure}%
\begin{subfigure}{.5\textwidth}
  \centering
  \includegraphics[scale=0.43,trim={16pt 5pt 43pt 31pt},clip]{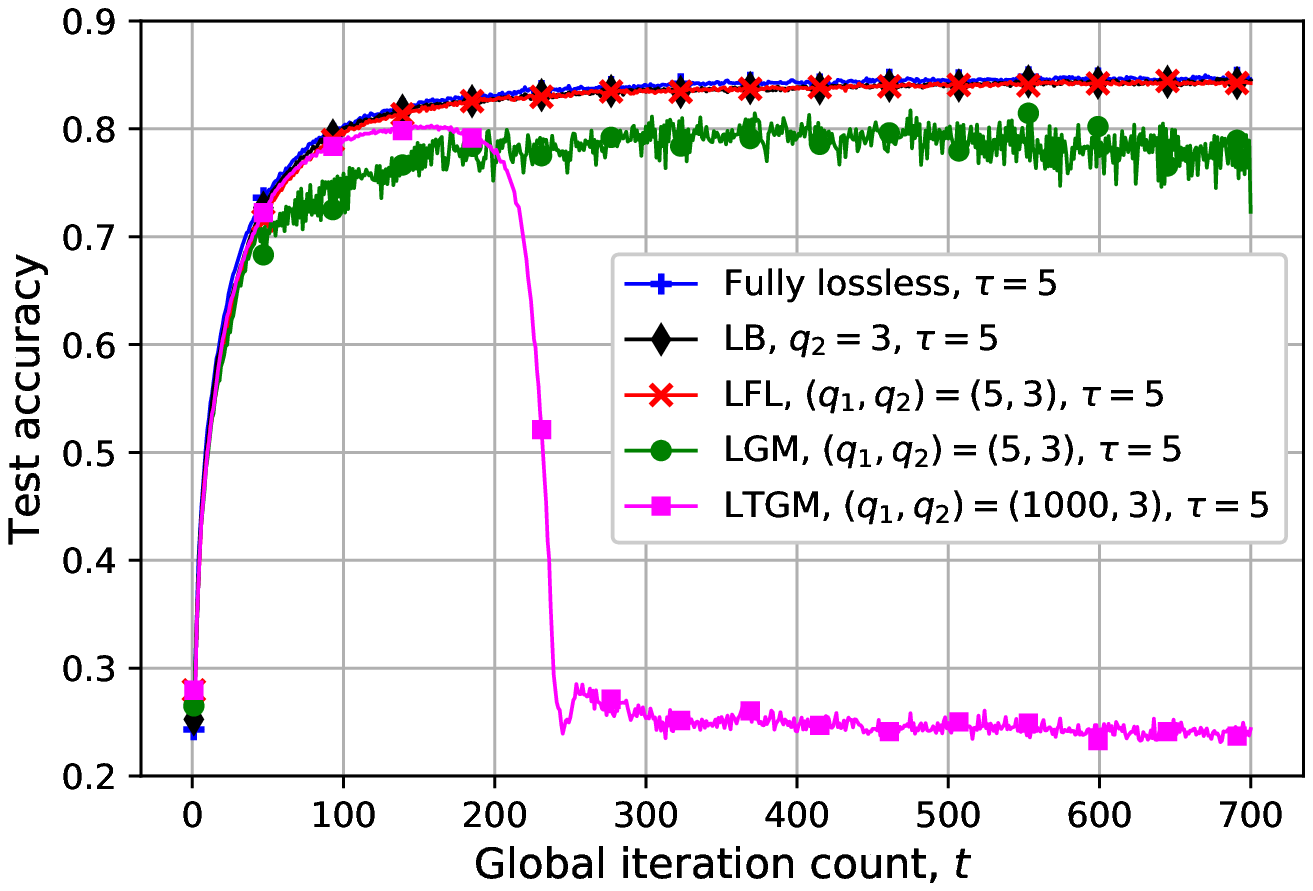}\vspace{0cm}
  \caption{CIFAR-10 with iid data}
  \label{Fig_CIFAR10}
\end{subfigure}\vspace{-.15cm}
\caption{Test accuracy using MNIST and CIFAR-10 for training with local mini-batch size $\left| \xi_m^i (t) \right| = 500$ and $\left| \xi_m^i (t) \right| = 250$, respectively.}
\label{Fig_IID_nonIID_Diff_q_nonIID}
\end{figure}

%\begin{figure}
%  \centering
%  \includegraphics[scale=0.55,trim={19pt 7pt 46pt 38pt},clip]{Figure_nonIID_Diff_qt.eps}
%  \caption{Test accuracy of LFL and the lossless broadcasting scheme for different $q_1$ values.}
%\label{Fig_IID_nonIID_Diff_q_nonIID}
%\end{figure}

In Figure \ref{Fig_IID_nonIID_Diff_q_nonIID} we illustrate the performance of different approaches for non-iid and iid scenarios using MNIST and CIFAR-10, respectively, for training with $M=40$ devices. 
Figure \ref{Fig_MNIST} demonstrates test accuracy of different approaches for non-iid data using MNIST with local mini-batch size $\left| \xi_m^i (t) \right| = 500$ and number of local iterations $\tau = 4$. 
We set $q_2=2$ for all the approaches where the devices perform quantization, and $q_1=2$ for the LFL and LGM schemes. 
We observe that the proposed LFL algorithm with $(q_1, q_2) = (2,2)$ performs as good as the fully lossless and LB approaches, despite a factor of $12.77$ savings in the number of bits that need to be broadcast compared to the LB approach.  
This illustrates the efficiency of the LFL algorithm for the iid scenario providing significant communication cost savings without any visible performance degradation.
On the other hand, the performance of the LGM algorithm drops after an intermediate number of training iterations, which shows that the quantization level $q_1=2$ does not provide the devices with an accurate estimate of the global model to rely on for local training.
This is particularly more harmful in later iterations as the algorithm approaches the optimal point where a finer estimate of the global model is needed for training. 
We highlight that the proposed LFL algorithm resolves this deficiency with the LGM algorithm through quantizating the global model update rather than the global model providing a more accurate estimate of the global model to the devices even with a relatively small quantization level $q_1=2$.
Throughout our experiments, we found that the random linear transform with the LTGM scheme is not highly efficient in providing a transformed vector with a relatively small peak-to-average ratio, and the quantization level $q_1$ should be relatively large to guarantee that the algorithm succeeds in learning.  
Therefore, we set $q_1=50$ for the LTGM scheme, which is a relatively large quantization value.
The advantage of the proposed LFL algorithm over the LTGM and LGM algorithms for the non-iid scenario can be clearly seen in the figure.

A similar observation is made in Figure \ref{Fig_CIFAR10} illustrating the perforance of different approaches for iid data using CIFAR-10 with local mini-batch size $\left| \xi_m^i (t) \right| = 250$ and number of local iterations $\tau = 5$.
The the LFL algorithm with $(q_1, q_2) = (5,3)$ provides $\times 9.2$ smaller communication load compared to LB with $q_2 = 3$ without any visible performance degradation with respect to the fully lossless and LB approaches.
It also significantly outperforms the LGM algorithm with $(q_1, q_2) = (5,3)$, which shows the advantage of quantizing the global model update rather than the global model
%in providing devices with a more accurate global model estimate 
for iid data. 
We also observe that the accuracy level of the LTGM algorithm drops significantly after around 200 global iterations even for a large quantization level $q_1=1000$, which shows the deficiency of the linear transform to provide a relatively small peak-to-average ratio for the transformed vector.

\vspace{-.25cm}
\section{Conclusion}\label{SecConc}
\vspace{-.15cm}
FL is demanding in terms of bandwidth, particularly when deep networks with huge numbers of parameters are trained across a large number of devices. 
Communication is typically the major bottleneck, since it involves iterative transmission over a bandwidth-limited wireless medium between the PS and a massive number of devices at the edge.
With the goal of reducing the communication cost, we have studied FL with lossy broadcasting, where, in contrast to most of the existing work in the literature, the PS broadcasts a compressed version of the global model to the devices. 
We have considered broadcasting quantized global model updates from the PS, which can be used to estimate the current global model at the devices for local SGD iterations.
The PS aggregates the quantized local model updates from the devices, according to which it updates the global model. 
We have derived convergence guarantees for the proposed LFL algorithm to analyze the impact of lossy broadcasting on the FL performance assuming accurate local model updates at the PS.  
Numerical experiments have shown the efficiency of the proposed LFL algorithm in providing an accurate estimate of the global model to the devices, where it performs as good as the fully lossless and LB approaches for both non-iid and iid data despite the significant reduction in the communication load. 
It also significantly outperforms the LTGM \cite{ExpandingRedClientResFL} and LGM \cite{DoubleSqueezeTangConf} algorithms studying compression in the PS-to-device direction thanks to quantizing the global model update rather than the global model at the PS.

\medskip

\bibliographystyle{unsrt}
\bibliography{nips_2020}

\appendix

\newpage 

\section{Stochastic quantization}\label{AppStochQuan}

Given $\boldsymbol{x} \in \mathbb{R}^d$, with the $i$-th entry denoted by $x_i$, we define
\begin{subequations}
\begin{align}
x_{\rm{max}}& \triangleq \max \left\{ \left| \boldsymbol{x} \right| \right\},\\
x_{\rm{min}}& \triangleq \min \left\{ \left| \boldsymbol{x} \right| \right\}.
\end{align}
\end{subequations}
Given a quantization level $q \ge 1$, we have
\begin{subequations}\label{QSGD}
\begin{align}
{Q} \left({x}_i, q\right) \triangleq {\rm{sign}} \left( {x}_i \right) \cdot \Big( x_{\rm{min}} + \left(x_{\rm{max}} - x_{\rm{min}} \right) \cdot  \varphi \Big( \frac{\left| {x}_i \right| - x_{\rm{min}}}{x_{\rm{max}} -x_{\rm{min}}}, q \Big) \Big), \quad \mbox{for $i \in [d]$},    
\end{align}
where $\varphi(\cdot, \cdot)$ is a quantization function defined in the following. For $0 \le x \le 1$ and $q \ge 1$, let $l \in \{ 0, 1, \dots, q-1 \}$ be an integer such that $x \in [l/q, (l+1)/q)$. We then define
\begin{align}\label{DefQ}
\varphi \left( x, q \right) \triangleq \begin{cases} 
l / q, & \mbox{with probability $1 - \left( x q - l\right)$},\\
(l+1) / q, & \mbox{with probability $x q - l$}.
\end{cases}
\end{align}
\end{subequations}
We define 
\begin{align}\label{QVecDefinition}
\boldsymbol{Q}(\boldsymbol{x}, q) \triangleq [Q({x}_1, q), \cdots, Q({x}_d, q) ]^T,    
\end{align}
and we highlight that it is represented by 
\begin{align}
R_{\rm{Q}} = 64 + d \left( 1 + \log_2(q+1) \right) \mbox{ bits,}    
\end{align}
where $64$ bits are used to represent $x_{\rm{max}}$ and $x_{\rm{min}}$, $d$ bits are used for ${\rm{sign}} (x_i)$, $\forall i \in [d]$, and $d \log_2(q+1)$ bits represent $\varphi \left( (\left| {x}_i \right|-x_{\rm{min}})/ (x_{\rm{max}} - x_{\rm{min}}), q \right)$, $\forall i \in [d]$. We note that we have modified the QSGD scheme proposed in \cite{DCAlistarhQSGD} by normalizing the
entries of vector $\boldsymbol{x}$ with $x_{\rm{max}} - x_{\rm{min}}$ rather than $\left\| \boldsymbol{x} \right\|_2$.

\section{Proof of Lemma \ref{LemDigQuanVar}}\label{AppProofLemma}

Given $\varphi \left( x, q \right)$ in (\ref{DefQ}), we have
\begin{align}\label{ProofLemma1_Eq1}
\mathbb{E}_{\varphi} \left[  \varphi(x, q) \right] & =  \Big(\frac{l}{q} \Big) \left( 1 + l - xq \right) + \Big(\frac{l+1}{q} \Big) \left( xq - l \right) = x.
\end{align}
Also, we have
\newcommand\ainequalle{\mathrel{\overset{\makebox[0pt]{\mbox{\normalfont\tiny\sffamily (a)}}}{\le}}}
\begin{align}\label{ProofLemma1_Eq2}
\mathbb{E}_{\varphi} \left[ \varphi^2(x, q) \right] & = \Big(\frac{l}{q} \Big)^2 \left( 1 + l - xq \right) + \Big(\frac{l+1}{q} \Big)^2 \left( xq - l \right)  = \frac{1}{q^2} \left( -l^2 + 2lxq + xq - l \right) \nonumber\\
&  = x^2 + \frac{1}{q^2} \left( xq - l \right) \left( 1 - xq + l \right) \ainequalle x^2 + \frac{1}{4 q^2},
\end{align}
where (a) follows since $\left( xq - l \right) \left( 1 - xq + l \right) \le 1/4$.
According to (\ref{ProofLemma1_Eq1}), (\ref{ProofLemma1_Eq2}) and the definition of $\boldsymbol{Q}(\boldsymbol{x}, q)$ given in (\ref{QVecDefinition}), it follows that
\newcommand\binequalle{\mathrel{\overset{\makebox[0pt]{\mbox{\normalfont\tiny\sffamily (b)}}}{\le}}}
\newcommand\onlyhereineq{\mathrel{\overset{\makebox[0pt]{\mbox{\normalfont\tiny\sffamily (c)}}}{\le}}}
\begin{subequations}\label{QuantizationBounds}
\begin{align}\label{Quantization_Expectation}
&\mathbb{E}_{\varphi} \left[  \boldsymbol{Q}(\boldsymbol{x}, q) \right] = \boldsymbol{x}, \\
&\mathbb{E}_{\varphi} \big[  \left\| \boldsymbol{Q}(\boldsymbol{x}, q) \right\|^2_2 \big] =  \sum\nolimits_{i=1}^{d} \mathbb{E}_{\varphi} \big[  \left| Q({x}_i, q) \right|^2_2 \big] = \left(x_{\rm{max}} - x_{\rm{min}} \right)^2 \sum\nolimits_{i=1}^{d} \mathbb{E}_{\varphi} \Big[ \varphi^2 \Big( \frac{\left| {x}_i \right| - x_{\rm{min}}}{x_{\rm{max}} -x_{\rm{min}}}, q \Big) \Big]  \nonumber\\
& \quad \;\; \quad + d x^2_{\rm{min}} + 2 x_{\rm{min}} (x_{\rm{max}} -x_{\rm{min}}) \sum\nolimits_{i=1}^{d} \mathbb{E}_{\varphi} \Big[ \varphi \Big( \frac{\left| {x}_i \right| - x_{\rm{min}}}{x_{\rm{max}} -x_{\rm{min}}}, q \Big) \Big] \nonumber\\
& \quad \;\; \binequalle \left(x_{\rm{max}} - x_{\rm{min}}\right)^2 \sum\nolimits_{i=1}^{d} \left( \Big( \frac{\left| {x}_i \right| - x_{\rm{min}}}{x_{\rm{max}} - x_{\rm{min}}} \Big)^2 + \frac{1}{4 q^2} \right)+ d x^2_{\rm{min}} + 2 x_{\rm{min}} \sum\nolimits_{i=1}^{d}( \left| {x}_i \right| - x_{\rm{min}}) \nonumber\\
& \quad \;\; = \left\| \boldsymbol{x} \right\|_2^2 + d \frac{\left(x_{\rm{max}} - x_{\rm{min}}\right)^2}{4q^2} \onlyhereineq \left\| \boldsymbol{x} \right\|_2^2 + \frac{\varepsilon d \left\| \boldsymbol{x} \right\|_2^2}{4 q^2},\label{Quantization_SquareExpectation}
\end{align}
\end{subequations}
where (b) follows from (\ref{ProofLemma1_Eq1}) and (\ref{ProofLemma1_Eq2}), and (c) follows since $\varepsilon = \left(x_{\rm{max}} - x_{\rm{min}}\right)^2 /  \left\| \boldsymbol{x} \right\|_2^2$.

\section{Proof of Theorem \ref{Theoremtheta_thetastarKequalM}}\label{AppProofTheorem}
We have 
\begin{align}\label{A_AppFDP_1}
\mathbb{E} \left[ \left\| \boldsymbol{\theta} (t+1) - {\boldsymbol{\theta}}^* \right\|_2^2 \right] = & \mathbb{E} \left[ \left\| \wh{\boldsymbol{\theta}} (t) - {\boldsymbol{\theta}}^* \right\|_2^2 \right] + \mathbb{E} \bigg[ \left\| \sum\nolimits_{m =1}^M \frac{B_m}{B} \Delta \boldsymbol{\theta}_m (t) \right\|_2^2 \bigg] \nonumber\\
& + 2 \mathbb{E} \left[ \langle \wh{\boldsymbol{\theta}} (t) - {\boldsymbol{\theta}}^* , \sum\nolimits_{m =1}^M \frac{B_m}{B} \Delta \boldsymbol{\theta}_m (t) \rangle \right].  
\end{align}
In the following, we bound the last two terms on the right hand side (RHS) of (\ref{A_AppFDP_1}).
From the convexity of $\left\| \cdot \right\|_2^2$, it follows that
\begin{align}\label{AppLemmaTemr_2_Eq_1_2}
&\mathbb{E} \bigg[ \left\| \sum\nolimits_{m =1}^M \frac{B_m}{B} \Delta \boldsymbol{\theta}_m (t) \right\|_2^2 \bigg] \le  \sum\nolimits_{m =1}^{M} \frac{B_m}{B} \mathbb{E} \left[ \left\| \Delta \boldsymbol{\theta}_m (t) \right\|_2^2 \right]  \nonumber\\
& \qquad \qquad \quad = \eta^2(t) \sum\nolimits_{m =1}^{M} \frac{B_m}{B} \mathbb{E} \left[ \left\| \sum\nolimits_{i=1}^{\tau} \nabla F_m \left( \boldsymbol{\theta}_m^i (t), \xi_m^i (t) \right) \right\|_2^2 \right] \nonumber\\
& \qquad \qquad \quad  \le \eta^2(t) \tau \sum\nolimits_{m =1}^{M} \sum\nolimits_{i=1}^{\tau} \frac{B_m}{B} \mathbb{E} \left[ \left\| \nabla F_m \left( \boldsymbol{\theta}_m^i (t), \xi_m^i (t) \right) \right\|_2^2 \right] \ainequalle \eta^2(t) \tau^2 G^2,
\end{align}
where (a) follows from Assumption \ref{AssumpBoundedVarGradient}.

We rewrite the third term on the RHS of (\ref{A_AppFDP_1}) as follows:
\begin{align}\label{A_AppLemmaTemr_2_Eq_2}
&2 \mathbb{E} \left[ \langle \wh{\boldsymbol{\theta}} (t) - {\boldsymbol{\theta}}^* ,  \sum\nolimits_{m =1}^{M} \frac{B_m}{B} \Delta \boldsymbol{\theta}_m (t) \rangle \right] \nonumber\\
& \qquad = 2 \eta(t) \sum\nolimits_{m=1}^{M} \frac{B_m}{B} \mathbb{E} \left[ \langle {\boldsymbol{\theta}}^* - \wh{\boldsymbol{\theta}} (t) , \sum\nolimits_{i=1}^{\tau} \nabla F_m \left( \boldsymbol{\theta}_m^i (t), \xi_m^i (t) \right) \rangle \right] \nonumber\\
& \qquad = 2 \eta(t) \sum\nolimits_{m=1}^{M} \frac{B_m}{B} \mathbb{E} \left[ \langle {\boldsymbol{\theta}}^* - \wh{\boldsymbol{\theta}} (t) , \nabla F_m \big( \wh{\boldsymbol{\theta}} (t), \xi_m^1 (t) \big) \rangle \right] \nonumber\\
& \qquad \quad + 2 \eta(t) \sum\nolimits_{m=1}^{M} \frac{B_m}{B} \mathbb{E} \left[ \langle {\boldsymbol{\theta}}^* - \wh{\boldsymbol{\theta}} (t) , \sum\nolimits_{i=2}^{\tau} \nabla F_m \left( \boldsymbol{\theta}_m^i (t), \xi_m^i (t) \right) \rangle \right]. 
\end{align}
We have
\newcommand\thirdequal{\mathrel{\overset{\makebox[0pt]{\mbox{\normalfont\tiny\sffamily (a)}}}{=}}}
\newcommand\thirdinequal{\mathrel{\overset{\makebox[0pt]{\mbox{\normalfont\tiny\sffamily (b)}}}{\le}}}
\newcommand\cinequal{\mathrel{\overset{\makebox[0pt]{\mbox{\normalfont\tiny\sffamily (c)}}}{\le}}}
\begin{align}\label{AppLemmaTemr_2_Eq_3}
& 2 \eta(t) \sum\nolimits_{m=1}^{M} \frac{B_m}{B} \mathbb{E} \left[ \langle {\boldsymbol{\theta}}^* - \wh{\boldsymbol{\theta}} (t) , \nabla F_m \big( \wh{\boldsymbol{\theta}} (t), \xi_m^1 (t) \big) \rangle \right] \nonumber\\
& \qquad \qquad \qquad \qquad \thirdequal 2 \eta(t) \sum\nolimits_{m=1}^{M} \frac{B_m}{B} \mathbb{E} \left[ \langle {\boldsymbol{\theta}}^* - \wh{\boldsymbol{\theta}} (t) , \nabla F_m \big( \wh{\boldsymbol{\theta}} (t) \big) \rangle \right] \nonumber\\
& \qquad \qquad \qquad \qquad \thirdinequal 2 \eta(t) \sum\nolimits_{m=1}^{M} \frac{B_m}{B} \mathbb{E} \left[ F_m (\boldsymbol{\theta}^*) - F_m \big( \wh{\boldsymbol{\theta}} (t) \big) - \frac{\mu}{2} \left\| \wh{\boldsymbol{\theta}} (t) - {\boldsymbol{\theta}}^* \right\|_2^2 \right] \nonumber\\
& \qquad \qquad \qquad \qquad = 2 \eta(t) \Big( F^* - \mathbb{E} \left[ F\big( \wh{\boldsymbol{\theta}} (t) \big) \right] - \frac{\mu}{2} \mathbb{E} \Big[ \left\| \wh{\boldsymbol{\theta}} (t) - {\boldsymbol{\theta}}^* \right\|_2^2 \Big] \Big)\nonumber\\
& \qquad \qquad \qquad \qquad \cinequal -\mu \eta(t) \mathbb{E} \Big[ \left\| \wh{\boldsymbol{\theta}} (t) - {\boldsymbol{\theta}}^* \right\|_2^2 \Big],
\end{align}
where (a) follows since $\mathbb{E}_{\xi} \left[ \nabla F_m \left( \boldsymbol{\theta}_m^i (t), \xi_m^i (t) \right) \right] = \nabla F_m \left( \boldsymbol{\theta}_m^i (t)  \right)$, $\forall i, m$, (b) follows from Assumption \ref{AssumpStrongConvexLoss}, and (c) follows since $F^* \le F\big( \wh{\boldsymbol{\theta}} (t) \big)$, $\forall t$.

\begin{lemma}\label{App_lemma_term2}
For $0 < \eta(t) \le 1$, we have
\begin{align}\label{EQ_LemmaTermE}
& 2 \eta(t) \sum\nolimits_{m=1}^{M} \frac{B_m}{B} \mathbb{E} \left[ \langle {\boldsymbol{\theta}}^* - \wh{\boldsymbol{\theta}} (t) , \sum\nolimits_{i=2}^{\tau} \nabla F_m \left( \boldsymbol{\theta}_m^i (t), \xi_m^i (t) \right) \rangle \right]  \nonumber\\
& \qquad   \le - \mu \eta(t) (1 - \eta(t)) (\tau - 1) \mathbb{E} \left[ \left\| \wh{\boldsymbol{\theta}} (t) - \boldsymbol{\theta}^* \right\|_2^2 \right] + \eta^2 (t) \left( \tau - 1 \right) G^2 \nonumber\\  
& \qquad  \quad + (1+ \mu (1 - \eta(t))) \eta^2(t) G^2 \frac{\tau (\tau-1)(2\tau-1)}{6} + 2 \eta(t) (\tau - 1) \Gamma . 
\end{align}
\end{lemma}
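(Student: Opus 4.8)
The plan is to analyze the inner-product term step-by-step over the $\tau-1$ local iterations $i = 2, \dots, \tau$. For each fixed $m$ and $i$, I would first replace the stochastic gradient $\nabla F_m(\boldsymbol{\theta}_m^i(t), \xi_m^i(t))$ by its expectation $\nabla F_m(\boldsymbol{\theta}_m^i(t))$ using the unbiasedness of mini-batch sampling (as in step (a) of \eqref{AppLemmaTemr_2_Eq_3}), conditioning appropriately so that the iterate $\boldsymbol{\theta}_m^i(t)$ is measurable. Then I would split the inner product $\langle \boldsymbol{\theta}^* - \wh{\boldsymbol{\theta}}(t), \nabla F_m(\boldsymbol{\theta}_m^i(t)) \rangle$ by inserting $\pm \boldsymbol{\theta}_m^i(t)$, writing it as $\langle \boldsymbol{\theta}^* - \boldsymbol{\theta}_m^i(t), \nabla F_m(\boldsymbol{\theta}_m^i(t)) \rangle + \langle \boldsymbol{\theta}_m^i(t) - \wh{\boldsymbol{\theta}}(t), \nabla F_m(\boldsymbol{\theta}_m^i(t)) \rangle$.

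The first piece is handled by $\mu$-strong convexity (Assumption \ref{AssumpStrongConvexLoss}), which gives $2\langle \boldsymbol{\theta}^* - \boldsymbol{\theta}_m^i(t), \nabla F_m(\boldsymbol{\theta}_m^i(t))\rangle \le 2(F_m(\boldsymbol{\theta}^*) - F_m(\boldsymbol{\theta}_m^i(t))) - \mu\|\boldsymbol{\theta}_m^i(t) - \boldsymbol{\theta}^*\|_2^2$; here I would further lower-bound $\|\boldsymbol{\theta}_m^i(t) - \boldsymbol{\theta}^*\|_2^2$ in terms of $\|\wh{\boldsymbol{\theta}}(t) - \boldsymbol{\theta}^*\|_2^2$ using a reverse-triangle / Young-type inequality together with a bound on $\|\boldsymbol{\theta}_m^i(t) - \wh{\boldsymbol{\theta}}(t)\|_2^2$, and sum the $F_m(\boldsymbol{\theta}^*) - F_m(\boldsymbol{\theta}_m^i(t))$ contributions against $F^* - \sum_m \frac{B_m}{B}F_m^*$ to produce the $\Gamma$ term (together with the factor $(\tau-1)$ from counting iterations). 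The second piece, $\langle \boldsymbol{\theta}_m^i(t) - \wh{\boldsymbol{\theta}}(t), \nabla F_m(\boldsymbol{\theta}_m^i(t))\rangle$, I would bound by Cauchy--Schwarz and Young's inequality, $\le \frac{1}{2}\|\boldsymbol{\theta}_m^i(t) - \wh{\boldsymbol{\theta}}(t)\|_2^2 + \frac{1}{2}\|\nabla F_m(\boldsymbol{\theta}_m^i(t))\|_2^2$, the latter bounded by $G^2$ (Assumption \ref{AssumpBoundedVarGradient}). The crucial auxiliary estimate throughout is a bound on the drift $\mathbb{E}[\|\boldsymbol{\theta}_m^i(t) - \wh{\boldsymbol{\theta}}(t)\|_2^2]$: since $\boldsymbol{\theta}_m^i(t) - \wh{\boldsymbol{\theta}}(t) = -\eta(t)\sum_{j=1}^{i-1}\nabla F_m(\boldsymbol{\theta}_m^j(t), \xi_m^j(t))$, convexity of $\|\cdot\|_2^2$ (or Jensen) and Assumption \ref{AssumpBoundedVarGradient} give $\mathbb{E}[\|\boldsymbol{\theta}_m^i(t) - \wh{\boldsymbol{\theta}}(t)\|_2^2] \le \eta^2(t)(i-1)^2 G^2$, and summing $(i-1)^2$ over $i = 2, \dots, \tau$ produces the factor $\frac{\tau(\tau-1)(2\tau-1)}{6}$ visible in the claimed bound.

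Assembling these: the strong-convexity piece contributes the $-\mu\eta(t)(1-\eta(t))(\tau-1)\mathbb{E}[\|\wh{\boldsymbol{\theta}}(t) - \boldsymbol{\theta}^*\|_2^2]$ term (the $(1-\eta(t))$ factor emerges from combining the $-\mu$ coefficient on $\|\boldsymbol{\theta}_m^i(t) - \boldsymbol{\theta}^*\|_2^2$ with the cross terms after the Young split), the $\eta^2(t)(\tau-1)G^2$ term from the Cauchy--Schwarz piece summed over $\tau-1$ iterations, the $(1 + \mu(1-\eta(t)))\eta^2(t)G^2\frac{\tau(\tau-1)(2\tau-1)}{6}$ term from the drift bound appearing in two places (once with coefficient $1$ from the Young split, once with $\mu(1-\eta(t))$ from the strong-convexity reshuffling), and $2\eta(t)(\tau-1)\Gamma$ from the $\Gamma$ bookkeeping. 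I expect the main obstacle to be the careful bookkeeping that isolates exactly the coefficients $(1-\eta(t))$ and $(1 + \mu(1-\eta(t)))$ — this requires choosing the Young's-inequality split parameter precisely (tied to $\eta(t) \le 1$) so that the $\|\boldsymbol{\theta}_m^i(t) - \wh{\boldsymbol{\theta}}(t)\|_2^2$ terms from strong convexity and from the inner-product split recombine into the stated form rather than a looser one; the condition $0 < \eta(t) \le 1$ is what makes $1 - \eta(t) \ge 0$ and keeps the negative term genuinely negative.
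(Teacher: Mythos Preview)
Your proposal is correct and follows essentially the same route as the paper's proof: the same splitting via $\pm\boldsymbol{\theta}_m^i(t)$, strong convexity (Assumption~\ref{AssumpStrongConvexLoss}) on the $\langle \boldsymbol{\theta}^* - \boldsymbol{\theta}_m^i(t),\cdot\rangle$ piece together with the $\Gamma$ bookkeeping, Young's inequality on the $\langle \boldsymbol{\theta}_m^i(t) - \wh{\boldsymbol{\theta}}(t),\cdot\rangle$ piece, the drift bound $\mathbb{E}\big[\|\boldsymbol{\theta}_m^i(t)-\wh{\boldsymbol{\theta}}(t)\|_2^2\big]\le \eta^2(t)(i-1)^2G^2$ summed to $\tfrac{\tau(\tau-1)(2\tau-1)}{6}$, and the same Young-type expansion of $-\|\boldsymbol{\theta}_m^i(t)-\boldsymbol{\theta}^*\|_2^2$ to extract $-(1-\eta(t))\|\wh{\boldsymbol{\theta}}(t)-\boldsymbol{\theta}^*\|_2^2$. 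The one refinement needed is exactly the one you anticipate at the end: the paper takes both Young splits with weights $1/\eta(t)$ and $\eta(t)$ (rather than $1/2$ and $1/2$), and that choice is what delivers the precise coefficients $(1-\eta(t))$ and $(1+\mu(1-\eta(t)))$ in the stated bound.
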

\begin{proof}
See Appendix \ref{App_Proof_lemma}. 
\end{proof}

By substituting (\ref{AppLemmaTemr_2_Eq_3}) and (\ref{EQ_LemmaTermE}) in (\ref{A_AppLemmaTemr_2_Eq_2}), it follows that
\begin{align}\label{EQ_LemmaTermE_combined}
& 2 \mathbb{E} \left[ \langle \wh{\boldsymbol{\theta}} (t) - {\boldsymbol{\theta}}^* ,  \sum\nolimits_{m =1}^{M} \frac{B_m}{B} \Delta \boldsymbol{\theta}_m (t) \rangle \right]  \nonumber\\
& \qquad   \le - \mu \eta(t) (\tau - \eta(t) (\tau - 1)) \mathbb{E} \left[ \left\| \wh{\boldsymbol{\theta}} (t) - \boldsymbol{\theta}^* \right\|_2^2 \right] + \eta^2 (t) \left( \tau - 1 \right) G^2 \nonumber\\  
& \qquad  \quad + (1+ \mu (1 - \eta(t))) \eta^2(t) G^2 \frac{\tau (\tau-1)(2\tau-1)}{6} + 2 \eta(t) (\tau - 1) \Gamma, 
\end{align}
which, together with the inequality in (\ref{AppLemmaTemr_2_Eq_1_2}), leads to the following upper bound on $\mathbb{E} \left[ \left\| \boldsymbol{\theta} (t+1) - {\boldsymbol{\theta}}^* \right\|_2^2 \right]$, when substituted into (\ref{A_AppFDP_1}):
\begin{align}\label{APP1_Final_0}
\mathbb{E} \left[ \left\| \boldsymbol{\theta} (t+1) - {\boldsymbol{\theta}}^* \right\|_2^2 \right] \le & \left(1 - \mu \eta(t) (\tau - \eta(t) (\tau - 1)) \right) \mathbb{E} \left[ \left\| \wh{\boldsymbol{\theta}} (t) - \boldsymbol{\theta}^* \right\|_2^2 \right] + \eta^2 (t) \left( \tau^2+ \tau - 1 \right) G^2  \nonumber\\  
& + (1+ \mu (1 - \eta(t))) \eta^2(t) G^2 \frac{\tau (\tau-1)(2\tau-1)}{6} + 2 \eta(t) (\tau - 1) \Gamma.
\end{align}

\begin{lemma}\label{Lemma_ThetaHatTheta}
For $\wh{\boldsymbol{\theta}} (t)$ given in (\ref{ThetaHat_t}), we have
\begin{align}\label{EQ_Lemma_ThetaHatTheta}
\mathbb{E} \left[ \left\| \wh{\boldsymbol{\theta}} (t) - {\boldsymbol{\theta}}^* \right\|_2^2 \right] \le \mathbb{E} \left[ \left\| {\boldsymbol{\theta}} (t) - {\boldsymbol{\theta}}^* \right\|_2^2 \right] + \Big(\frac{\eta(t-1) \tau G}{2q_1(t)}\Big)^2 \varepsilon d. 
\end{align}
for some $0 \le \varepsilon \le 1$.
\end{lemma}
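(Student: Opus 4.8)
The plan is to isolate the single stochastic quantization that separates $\wh{\boldsymbol{\theta}}(t)$ from $\boldsymbol{\theta}(t)$ and apply Lemma~\ref{LemDigQuanVar} conditionally. Put $\boldsymbol{x} \triangleq \boldsymbol{\theta}(t) - \wh{\boldsymbol{\theta}}(t-1)$. The recursion (\ref{ThetaHat_t}) reads $\wh{\boldsymbol{\theta}}(t) - \boldsymbol{\theta}^* = \big(\wh{\boldsymbol{\theta}}(t-1) - \boldsymbol{\theta}^*\big) + \boldsymbol{Q}(\boldsymbol{x}, q_1)$, whereas trivially $\boldsymbol{\theta}(t) - \boldsymbol{\theta}^* = \big(\wh{\boldsymbol{\theta}}(t-1) - \boldsymbol{\theta}^*\big) + \boldsymbol{x}$. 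Conditioning on the $\sigma$-algebra generated by all randomness up to and including the formation of $\boldsymbol{\theta}(t)$ — so that $\wh{\boldsymbol{\theta}}(t-1)$ and $\boldsymbol{x}$ are deterministic and only the quantization noise $\varphi$ at iteration $t$ remains — I would expand the square and take $\mathbb{E}_{\varphi}$. The cross term equals $2\langle \wh{\boldsymbol{\theta}}(t-1)-\boldsymbol{\theta}^*,\ \mathbb{E}_{\varphi}[\boldsymbol{Q}(\boldsymbol{x},q_1)]\rangle = 2\langle \wh{\boldsymbol{\theta}}(t-1)-\boldsymbol{\theta}^*,\ \boldsymbol{x}\rangle$ by the unbiasedness in Lemma~\ref{LemDigQuanVar}, and the quantization term obeys $\mathbb{E}_{\varphi}[\|\boldsymbol{Q}(\boldsymbol{x},q_1)\|_2^2] \le \|\boldsymbol{x}\|_2^2 + \varepsilon d\,\|\boldsymbol{x}\|_2^2/(4q_1^2)$ by the same lemma. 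Recombining the first two contributions into a perfect square gives $\|\wh{\boldsymbol{\theta}}(t-1)-\boldsymbol{\theta}^* + \boldsymbol{x}\|_2^2 = \|\boldsymbol{\theta}(t)-\boldsymbol{\theta}^*\|_2^2$, so
\begin{align*}
\mathbb{E}_{\varphi}\big[\|\wh{\boldsymbol{\theta}}(t)-\boldsymbol{\theta}^*\|_2^2\big] \le \|\boldsymbol{\theta}(t)-\boldsymbol{\theta}^*\|_2^2 + \frac{\varepsilon d}{4q_1^2}\,\|\boldsymbol{\theta}(t)-\wh{\boldsymbol{\theta}}(t-1)\|_2^2 .
\end{align*}

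Next I would take the outer expectation and bound $\mathbb{E}\big[\|\boldsymbol{\theta}(t)-\wh{\boldsymbol{\theta}}(t-1)\|_2^2\big]$ using identity (\ref{Theta_ThetaHat}): this equals $\eta^2(t-1)\,\mathbb{E}\big[\|\sum_{m=1}^{M}\sum_{i=1}^{\tau}\tfrac{B_m}{B}\nabla F_m(\boldsymbol{\theta}_m^i(t-1),\xi_m^i(t-1))\|_2^2\big]$, which is precisely the quantity bounded inside (\ref{AppLemmaTemr_2_Eq_1_2}): convexity of $\|\cdot\|_2^2$ removes the $\tfrac{B_m}{B}$ weights, Cauchy--Schwarz on the $\tau$-term inner sum produces a factor $\tau$, and Assumption~\ref{AssumpBoundedVarGradient} bounds each $\mathbb{E}_{\xi}[\|\nabla F_m\|_2^2]$ by $G^2$, yielding $\eta^2(t-1)\tau^2 G^2$. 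Substituting into the displayed inequality gives $\mathbb{E}\big[\|\wh{\boldsymbol{\theta}}(t)-\boldsymbol{\theta}^*\|_2^2\big] \le \mathbb{E}\big[\|\boldsymbol{\theta}(t)-\boldsymbol{\theta}^*\|_2^2\big] + \big(\tfrac{\eta(t-1)\tau G}{2q_1}\big)^2 \varepsilon d$, which is the claim.

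The one point that needs care is the status of $\varepsilon$. In Lemma~\ref{LemDigQuanVar} it is a per-realization quantity $(\max\{|\boldsymbol{x}|\}-\min\{|\boldsymbol{x}|\})^2/\|\boldsymbol{x}\|_2^2 \in [0,1]$, so after the outer expectation one is really left with $\tfrac{d}{4q_1^2}\,\mathbb{E}\big[(\max\{|\boldsymbol{\theta}(t)-\wh{\boldsymbol{\theta}}(t-1)|\}-\min\{|\boldsymbol{\theta}(t)-\wh{\boldsymbol{\theta}}(t-1)|\})^2\big]$; replacing this by $\varepsilon\,\tfrac{d}{4q_1^2}\,\mathbb{E}\big[\|\boldsymbol{\theta}(t)-\wh{\boldsymbol{\theta}}(t-1)\|_2^2\big]$ for a single constant $0\le\varepsilon\le1$ is exactly inequality (\ref{Ineq_For_eps}) (via (\ref{Quantization_SquareExpectation})). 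It holds trivially with $\varepsilon=1$ and, since by (\ref{Theta_ThetaHat}) the aggregated stochastic gradient has entries of comparable magnitude, with a far smaller effective $\varepsilon$ in practice; I would state the lemma with this averaged $\varepsilon$ and note it is the same $\varepsilon$ carried through Theorem~\ref{Theoremtheta_thetastarKequalM}. Modulo this bookkeeping, the proof is a single conditional bias--variance decomposition followed by the gradient-norm bound already established earlier in the appendix, so I do not anticipate a serious obstacle.
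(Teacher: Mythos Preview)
Your proposal is correct and follows essentially the same approach as the paper: both exploit the unbiasedness and second-moment bound of Lemma~\ref{LemDigQuanVar} on the single quantization step, then bound $\mathbb{E}\big[\|\boldsymbol{\theta}(t)-\wh{\boldsymbol{\theta}}(t-1)\|_2^2\big]$ via convexity of $\|\cdot\|_2^2$ and Assumption~\ref{AssumpBoundedVarGradient}, and both handle $\varepsilon$ by passing to a uniform constant as in (\ref{Ineq_For_eps}). The only cosmetic difference is that the paper expands $\|\wh{\boldsymbol{\theta}}(t)-\boldsymbol{\theta}^*\|_2^2$ about the origin (separating $\|\wh{\boldsymbol{\theta}}(t)\|_2^2$, $\|\boldsymbol{\theta}^*\|_2^2$, and the cross term) before applying unbiasedness, whereas you expand directly about $\boldsymbol{\theta}^*$; your route is slightly more streamlined but the ingredients are identical.
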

\begin{proof}
See Appendix \ref{App_Proof_lemma_2}. 
\end{proof}

According to Lemma \ref{Lemma_ThetaHatTheta}, the inequality in (\ref{EQ_Lemma_ThetaHatTheta}) can be rewritten as follows:
\begin{align}\label{APP1_Final}
\mathbb{E} &\left[ \left\| \boldsymbol{\theta} (t+1) - {\boldsymbol{\theta}}^* \right\|_2^2 \right] \le  \left(1 - \mu \eta(t) (\tau - \eta(t) (\tau - 1)) \right) \mathbb{E} \left[ \left\| {\boldsymbol{\theta}} (t) - \boldsymbol{\theta}^* \right\|_2^2 \right]   \nonumber\\  
& \qquad \qquad + \left( 1 - \mu \eta (t) \left( \tau - \eta(t) (\tau - 1) \right) \right) \Big(\frac{\eta(t-1) \tau G}{2q_1(t)}\Big)^2 \varepsilon d + \eta^2 (t) \left( \tau^2+ \tau - 1 \right) G^2 \nonumber\\
& \qquad \qquad + (1+ \mu (1 - \eta(t))) \eta^2(t) G^2 \frac{\tau (\tau-1)(2\tau-1)}{6} + 2 \eta(t) (\tau - 1) \Gamma.    
\end{align}
Theorem \ref{Theoremtheta_thetastarKequalM} follows from the inequality in (\ref{APP1_Final}) having $0 < \eta(t) \le \min \left\{ 1, \frac{1}{\mu \tau} \right\}$, $\forall t$.

\section{Proof of Lemma \ref{App_lemma_term2}}\label{App_Proof_lemma}
We have 
\begin{align}\label{EQ_LemmaTermE_app_1}
& 2 \eta(t) \sum\nolimits_{m=1}^{M} \frac{B_m}{B} \sum\nolimits_{i=2}^{\tau} \mathbb{E} \left[ \langle \boldsymbol{\theta}^* - \wh{\boldsymbol{\theta}} (t) , \nabla F_m \left( \boldsymbol{\theta}_m^i (t), \xi_m^i (t) \right) \rangle \right] \nonumber\\
& \quad = 2 \eta(t) \sum\nolimits_{m=1}^{M} \frac{B_m}{B} \sum\nolimits_{i=2}^{\tau} \mathbb{E} \left[ \langle \boldsymbol{\theta}_m^i (t) - \wh{\boldsymbol{\theta}} (t) , \nabla F_m \left( \boldsymbol{\theta}_m^i (t), \xi_m^i (t) \right) \rangle \right]\nonumber\\
& \quad \;\; \;\;+ 2 \eta(t) \sum\nolimits_{m=1}^{M} \frac{B_m}{B} \sum\nolimits_{i=2}^{\tau} \mathbb{E} \left[ \langle \boldsymbol{\theta}^* - \boldsymbol{\theta}_m^i (t),  \nabla F_m \left( \boldsymbol{\theta}_m^i (t), \xi_m^i (t) \right) \rangle \right].  
\end{align}
We first bound the first term on the RHS of (\ref{EQ_LemmaTermE_app_1}). We have
\begin{align}\label{AppLemmaTemr_1_D_Eq_1}
& 2 \eta(t) \sum\nolimits_{m=1}^{M} \frac{B_m}{B} \sum\nolimits_{i=2}^{\tau} \mathbb{E} \left[ \langle \boldsymbol{\theta}_m^i (t) - \wh{\boldsymbol{\theta}} (t) , \nabla F_m \left( \boldsymbol{\theta}_m^i (t), \xi_m^i (t) \right) \rangle \right]\nonumber \\
& \; \; \; \quad \le \eta(t) \sum\nolimits_{m=1}^{M} \frac{B_m}{B} \sum\nolimits_{i=2}^{\tau} \mathbb{E} \bigg[ \frac{1}{\eta(t)} \left\| \boldsymbol{\theta}_m^i (t) - \wh{\boldsymbol{\theta}} (t) \right\|_2^2 + \eta(t) \left\| \nabla F_m \left( \boldsymbol{\theta}_m^i (t), \xi_m^i (t) \right) \right\|_2^2 \bigg]\nonumber\\
& \; \; \; \quad \ainequalle \sum\nolimits_{m=1}^{M} \frac{B_m}{B} \sum\nolimits_{i=2}^{\tau} \mathbb{E} \left[ \left\| \boldsymbol{\theta}_m^i (t) - \wh{\boldsymbol{\theta}} (t) \right\|_2^2 \right] + \eta^2 (t) \left( \tau - 1 \right) G^2,
\end{align}
where (a) follows from Assumption \ref{AssumpBoundedVarGradient}.
We have
\begin{align}\label{AppLemmaTemr_2_Eq_7}
&\sum\limits_{m=1}^{M} \frac{B_m}{B} \sum\limits_{i=2}^{\tau} \mathbb{E} \left[ \left\| \boldsymbol{\theta}_m^i (t) - \wh{\boldsymbol{\theta}} (t) \right\|_2^2 \right] \nonumber\\
& = \eta^2(t) \sum\limits_{m=1}^{M} \frac{B_m}{B} \sum\limits_{i=2}^{\tau} \mathbb{E} \left[ \left\| \sum\nolimits_{j=1}^{i} \nabla F_m \left( \boldsymbol{\theta}_m^j (t), \xi_m^j (t) \right) \right\|_2^2 \right] \binequalle \eta^2(t) G^2 \frac{\tau (\tau-1)(2\tau-1)}{6}, 
\end{align}
where (b) follows from the convexity of $\left\| \cdot \right\|_2^2$ and Assumption \ref{AssumpBoundedVarGradient}.
Plugging (\ref{AppLemmaTemr_2_Eq_7}) into (\ref{AppLemmaTemr_1_D_Eq_1}) yields  
\begin{align}\label{AppLemmaTemr_1_D_Eq_2}
& 2 \eta(t) \sum\nolimits_{m=1}^{M} \frac{B_m}{B} \sum\nolimits_{i=2}^{\tau} \mathbb{E} \left[ \langle \boldsymbol{\theta}_m^i (t) - \wh{\boldsymbol{\theta}} (t) , \nabla F_m \left( \boldsymbol{\theta}_m^i (t), \xi_m^i (t) \right) \rangle \right]\nonumber \\
& \; \; \; \qquad \qquad \qquad \qquad \qquad \qquad \le \eta^2(t) G^2 \frac{\tau (\tau-1)(2\tau-1)}{6} + \eta^2 (t) \left( \tau - 1 \right) G^2.
\end{align}
For the second term on the RHS of (\ref{EQ_LemmaTermE_app_1}), we have
\newcommand\cinequalle{\mathrel{\overset{\makebox[0pt]{\mbox{\normalfont\tiny\sffamily (c)}}}{\le}}}
\begin{align}\label{EQ_LemmaTermE_D_app_1}
& 2 \eta(t) \sum\nolimits_{m=1}^{M} \frac{B_m}{B} \sum\nolimits_{i=2}^{\tau} \mathbb{E} \left[ \langle \boldsymbol{\theta}^* - \boldsymbol{\theta}_m^i (t) , \nabla F_m \left( \boldsymbol{\theta}_m^i (t), \xi_m^i (t) \right) \rangle \right] \nonumber\\
& \thirdequal 2 \eta(t) \sum\nolimits_{m=1}^{M} \frac{B_m}{B} \sum\nolimits_{i=2}^{\tau} \mathbb{E} \left[ \langle \boldsymbol{\theta}^* - \boldsymbol{\theta}_m^i (t) , \nabla F_m \left( \boldsymbol{\theta}_m^i (t) \right) \rangle \right]\nonumber\\
& \thirdinequal 2 \eta(t) \sum\nolimits_{m=1}^{M} \frac{B_m}{B} \sum\nolimits_{i=2}^{\tau} \mathbb{E} \left[ F_m (\boldsymbol{\theta}^*) - F_m(\boldsymbol{\theta}_m^i (t)) - \frac{\mu}{2} \left\| \boldsymbol{\theta}_m^i (t) - {\boldsymbol{\theta}}^* \right\|_2^2 \right]\nonumber\\
& = 2 \eta(t) \sum\nolimits_{m=1}^{M} \frac{B_m}{B} \sum\nolimits_{i=2}^{\tau} \mathbb{E} \left[ F_m (\boldsymbol{\theta}^*) - F_m^* + F_m^* - F_m(\boldsymbol{\theta}_m^i (t)) - \frac{\mu}{2} \left\| \boldsymbol{\theta}_m^i (t) - {\boldsymbol{\theta}}^* \right\|_2^2 \right]\nonumber\\
& = 2 \eta(t) (\tau - 1) \Gamma +  2 \eta(t) \sum\nolimits_{m=1}^{M} \frac{B_m}{B} \sum\nolimits_{i=2}^{\tau} \left( F_m^* - \mathbb{E} \left[ F_m({\boldsymbol{\theta}}_m^i (t)) \right] \right) \nonumber\\
& \quad - \mu \eta(t) \sum\nolimits_{m=1}^{M} \frac{B_m}{B} \sum\nolimits_{i=2}^{\tau} \mathbb{E} \left[ \left\| \boldsymbol{\theta}_m^i (t) - {\boldsymbol{\theta}}^* \right\|_2^2 \right] \nonumber\\
& \cinequalle 2 \eta(t) (\tau - 1) \Gamma - \mu \eta(t) \sum\nolimits_{m=1}^{M} \frac{B_m}{B} \sum\nolimits_{i=2}^{\tau} \mathbb{E} \left[ \left\| \boldsymbol{\theta}_m^i (t) - {\boldsymbol{\theta}}^* \right\|_2^2 \right],
\end{align}
where (a) follows since $\mathbb{E}_{\xi} \left[ \nabla F_m \left( \boldsymbol{\theta} (t), \xi_m^i (t) \right) \right] = \nabla F_m \left( \boldsymbol{\theta} (t)  \right)$, $\forall i, m, t$; (b) follows from Assumption \ref{AssumpStrongConvexLoss}; and (c) follows since $F^*_m \le F_m({\boldsymbol{\theta}}_m^i (t))$, $\forall m$.
We have
\begin{align}\label{EQ_LemmaTermE_app_2}
& - \left\| \boldsymbol{\theta}_m^i (t) - {\boldsymbol{\theta}}^* \right\|_2^2 = - \left\| \boldsymbol{\theta}_m^i (t) - \widehat{\boldsymbol{\theta}} (t) \right\|_2^2 - \left\| \widehat{\boldsymbol{\theta}} (t) - {\boldsymbol{\theta}}^* \right\|_2^2 - 2 \langle \boldsymbol{\theta}_m^i (t) - \widehat{\boldsymbol{\theta}} (t) , \widehat{\boldsymbol{\theta}} (t) - {\boldsymbol{\theta}}^* \rangle \nonumber\\
& \qquad \quad \ainequalle - \left\| \boldsymbol{\theta}_m^i (t) - \widehat{\boldsymbol{\theta}} (t) \right\|_2^2 - \left\| \widehat{\boldsymbol{\theta}} (t) - {\boldsymbol{\theta}}^* \right\|_2^2 + \frac{1}{\eta(t)} \left\| \boldsymbol{\theta}_m^i (t) - \widehat{\boldsymbol{\theta}} (t) \right\|_2^2 + \eta(t) \left\| \widehat{\boldsymbol{\theta}} (t) - {\boldsymbol{\theta}}^* \right\|_2^2 \nonumber\\
&\qquad \quad = - (1 - \eta(t)) \left\| \widehat{\boldsymbol{\theta}} (t) - {\boldsymbol{\theta}}^* \right\|_2^2 + \Big(\frac{1}{\eta(t)} - 1\Big) \left\| \boldsymbol{\theta}_m^i (t) - \widehat{\boldsymbol{\theta}} (t) \right\|_2^2,
\end{align}
where (a) follows from Cauchy-Schwarz inequality. 
Plugging (\ref{EQ_LemmaTermE_app_2}) into (\ref{EQ_LemmaTermE_D_app_1}) yields
\begin{align}\label{EQ_LemmaTermE_app_3}
& \frac{2 \eta(t)}{M} \sum\nolimits_{m=1}^{M} \sum\nolimits_{i=2}^{\tau} \mathbb{E} \left[ \langle \boldsymbol{\theta}^* - \boldsymbol{\theta}_m^i (t) , \nabla F_m \left( \boldsymbol{\theta}_m^i (t), \xi_m^i (t) \right) \rangle \right] \nonumber\\
& \le - \mu \eta(t) (1 - \eta(t)) (\tau - 1)  \left\| \widehat{\boldsymbol{\theta}} (t) - {\boldsymbol{\theta}}^* \right\|_2^2 + \mu (1 - \eta(t)) \eta^2(t) G^2 \frac{\tau (\tau-1)(2\tau-1)}{6} + 2 \eta(t) (\tau - 1) \Gamma,
\end{align}
where we used the inequality in (\ref{AppLemmaTemr_2_Eq_7}) and $\eta(t) \le 1$. Plugging (\ref{AppLemmaTemr_1_D_Eq_2}) and (\ref{EQ_LemmaTermE_app_3}) into (\ref{EQ_LemmaTermE_app_1}) completes the proof of Lemma \ref{App_lemma_term2}.

\section{Proof of Lemma \ref{Lemma_ThetaHatTheta}}\label{App_Proof_lemma_2}
We have 
\begin{align}\label{App_ThetaHat_1}
\mathbb{E} \left[ \left\| \wh{\boldsymbol{\theta}} (t) - {\boldsymbol{\theta}}^* \right\|_2^2 \right] & = \mathbb{E} \left[ \left\| \wh{\boldsymbol{\theta}} (t)  \right\|_2^2 \right] + \mathbb{E} \left[ \left\| {\boldsymbol{\theta}}^* \right\|_2^2 \right] - 2 \mathbb{E} \left[ \langle \wh{\boldsymbol{\theta}} (t) , {\boldsymbol{\theta}}^* \rangle \right] \nonumber\\  
& \thirdequal \mathbb{E} \left[ \left\| \wh{\boldsymbol{\theta}} (t)  \right\|_2^2 \right] + \mathbb{E} \left[ \left\| {\boldsymbol{\theta}}^* \right\|_2^2 \right] - 2 \mathbb{E} \left[ \langle {\boldsymbol{\theta}} (t) , {\boldsymbol{\theta}}^* \rangle \right],
\end{align}
where (a) follows since 
\begin{align}
\mathbb{E} \left[ \wh{\boldsymbol{\theta}} (t)  \right] =  \mathbb{E} \left[ \wh{\boldsymbol{\theta}} (t-1)  \right] + \mathbb{E} \left[ \boldsymbol{Q}\big(\boldsymbol{\theta} (t) - \wh{\boldsymbol{\theta}} (t-1), q_1(t)\big)  \right] = \mathbb{E} \left[ {\boldsymbol{\theta}} (t)  \right],
\end{align}
where the last equality follows from (\ref{Quantization_Expectation}). In the following, we upper bound $\mathbb{E} \left[ \left\| \wh{\boldsymbol{\theta}} (t)  \right\|_2^2 \right]$. We have
\begin{align}\label{App_ThetaHatWithA}
\mathbb{E} \left[ \left\| \wh{\boldsymbol{\theta}} (t)  \right\|_2^2 \right] =& \mathbb{E} \left[ \left\| \wh{\boldsymbol{\theta}} (t-1)  \right\|_2^2 \right] + \mathbb{E} \left[ \left\| \boldsymbol{Q}\big(\boldsymbol{\theta} (t) - \wh{\boldsymbol{\theta}} (t-1), q_1(t)\big)  \right\|_2^2 \right] \nonumber\\
& +  2 \mathbb{E} \left[ \langle \wh{\boldsymbol{\theta}} (t-1) , \boldsymbol{Q}\big(\boldsymbol{\theta} (t) - \wh{\boldsymbol{\theta}} (t-1), q_1(t)\big) \rangle \right] \nonumber\\
\ainequalle & \mathbb{E} \left[ \left\| \wh{\boldsymbol{\theta}} (t-1)  \right\|_2^2 \right] + \mathbb{E} \left[ \left\| \boldsymbol{\theta} (t) - \wh{\boldsymbol{\theta}} (t-1) \right\|_2^2 \right] + \frac{\varepsilon(t) d}{4 q_1^2(t)} \mathbb{E} \left[ \left\| \boldsymbol{\theta} (t) - \wh{\boldsymbol{\theta}} (t-1) \right\|_2^2 \right]\nonumber\\
& + 2 \mathbb{E} \left[ \langle \wh{\boldsymbol{\theta}} (t-1),  \boldsymbol{\theta} (t) - \wh{\boldsymbol{\theta}} (t-1)\big) \rangle \right]\nonumber\\
\binequalle & \mathbb{E} \left[ \left\| {\boldsymbol{\theta}} (t)  \right\|_2^2 \right] + \frac{\varepsilon d}{4 q_1^2(t)} \mathbb{E} \left[ \left\| \boldsymbol{\theta} (t) - \wh{\boldsymbol{\theta}} (t-1) \right\|_2^2 \right],
\end{align}
where (a) follows from (\ref{QuantizationBounds}) for some $0 \le \varepsilon (t) \le 1$ defined as
\begin{align}\label{EpsilontDefin}
\varepsilon (t) \triangleq \frac{\mathbb{E} \left[ \left( \max \left\{ \left| \boldsymbol{\theta} (t) - \wh{\boldsymbol{\theta}} (t-1) \right| \right\} - \min \left\{ \left| \boldsymbol{\theta} (t) - \wh{\boldsymbol{\theta}} (t-1) \right| \right\} \right)^2 \right]}{\mathbb{E} \left[ \left\| \boldsymbol{\theta} (t) - \wh{\boldsymbol{\theta}} (t-1) \right\|_2^2 \right]},    
\end{align}
noting that 
\begin{align}\label{NoteThatUpDown}
\boldsymbol{\theta} (t) - \wh{\boldsymbol{\theta}} (t-1) = - \eta (t-1) \sum\limits_{m = 1}^M \sum\limits_{i=1}^{\tau} \frac{B_m}{B} \nabla F_m \left( \boldsymbol{\theta}_m^i (t-1), \xi_m^i (t-1) \right),    
\end{align}
and in (b) we define $\varepsilon \triangleq \max\nolimits_{t} \{\varepsilon(t)\}$.
According to (\ref{NoteThatUpDown}), from the convexity of $\left\| \cdot \right\|^2_2$, it follows that
\begin{align}
\mathbb{E} \left[ \left\| \boldsymbol{\theta} (t) - \wh{\boldsymbol{\theta}} (t-1) \right\|_2^2 \right] & \le \eta^2 (t-1) \sum\limits_{m = 1}^M \sum\limits_{i=1}^{\tau} \frac{B_m}{B} \mathbb{E} \left[ \left\| \nabla F_m \left( \boldsymbol{\theta}_m^i (t-1), \xi_m^i (t-1) \right) \right\|_2^2 \right] \nonumber\\
& \ainequalle  \eta^2 (t-1) \tau^2 G^2, 
\end{align}
where (a) follows from Assumption \ref{AssumpBoundedVarGradient}. Accordingly, (\ref{App_ThetaHatWithA}) reduces to
\begin{align}
\mathbb{E} \left[ \left\| \wh{\boldsymbol{\theta}} (t)  \right\|_2^2 \right] \le \mathbb{E} \left[ \left\| {\boldsymbol{\theta}} (t)  \right\|_2^2 \right] + \Big(\frac{\eta (t-1) \tau G}{2q_1(t)}\Big)^2 \varepsilon d.    
\end{align}
Substituting the above inequality into (\ref{App_ThetaHat_1}) yields
\begin{align}
\mathbb{E} \left[ \left\| \wh{\boldsymbol{\theta}} (t) - {\boldsymbol{\theta}}^* \right\|_2^2 \right] & \le  \mathbb{E} \left[ \left\| {\boldsymbol{\theta}} (t)  \right\|_2^2 \right] + \mathbb{E} \left[ \left\| {\boldsymbol{\theta}}^* \right\|_2^2 \right] - 2 \mathbb{E} \left[ \langle {\boldsymbol{\theta}} (t) , {\boldsymbol{\theta}}^* \rangle \right] + \Big(\frac{\eta (t-1) \tau G}{2q_1(t)}\Big)^2 \varepsilon d \nonumber\\
& = \mathbb{E} \left[ \left\| {\boldsymbol{\theta}} (t) - {\boldsymbol{\theta}}^* \right\|_2^2 \right] + \Big(\frac{\eta (t-1) \tau G}{2q_1(t)}\Big)^2 \varepsilon d.
\end{align}

\end{document}